\definecolor{ForestGreen}{rgb}{0.1333,0.5451,0.1333}
\definecolor{DarkRed}{rgb}{0.65,0,0}
\definecolor{Red}{rgb}{1,0,0}
\newcommand{\eat}[1]{}
\declaretheorem[numberwithin=section]{theorem}
\declaretheorem[numberlike=theorem]{lemma}
\declaretheorem[name=Open Question]{oq}
\crefname{algorithm}{Algorithm}{Algorithms}
\Crefname{algorithm}{Algorithm}{Algorithms}
\theoremstyle{definition}
\declaretheorem[numberlike=theorem]{definition}
\theoremstyle{definition}
\NewDocumentCommand{\cutsize}{O{\delta} g g e{_}}{%
  #1%%
  \IfNoValueF{#3}{_{#3}}%
  \IfNoValueF{#4}{_{#4}}%
  \IfNoValueF{#2}{\parof{#2}}
}%
\newcommand{\poly}{\operatorname{poly}} %
\newcommand{\ignore}[1]{}
\renewcommand{\paragraph}[1]{\medskip\noindent{\bf #1}\xspace}
\newcommand{\wg}{W}
\newcommand{\dist}{\operatorname{dist}}
\DeclareMathOperator{\polylog}{polylog}
\newcommand\eps{\varepsilon}
\author{Aaron Bernstein\thanks{Rutgers University. Funded by NSF CAREER grant 1942010 and Google Research Grant. \texttt{bernstei@gmail.com}} \and Greg Bodwin\thanks{University of Michigan.  Funded by NSF:AF 2153680. \texttt{bodwin@umich.edu}} \and Nicole Wein\thanks{Simons Institute. This work started while the author was at DIMACS, Rutgers University. \texttt{nswein@umich.edu}}}
\title{Are there graphs whose shortest path structure requires large edge weights?}
\date{}
\begin{document}

\maketitle

\abstract{
The \emph{aspect ratio} of a (positively) weighted graph $G$ is the ratio of its maximum edge weight to its minimum edge weight.
Aspect ratio commonly arises as a complexity measure in graph algorithms, especially related to the computation of shortest paths.
Popular paradigms are to interpolate between the settings of weighted and unweighted input graphs by incurring a dependence on aspect ratio, or by simply restricting attention to input graphs of low aspect ratio.

This paper studies the effects of these paradigms, investigating whether graphs of low aspect ratio have more structured shortest paths than graphs in general.
In particular, we raise the question of whether one can generally take a graph of large aspect ratio and \emph{reweight} its edges, to obtain a graph with bounded aspect ratio while preserving the structure of its shortest paths.
Our findings are:
\begin{itemize}
\item Every weighted DAG on $n$ nodes has a shortest-paths preserving graph of aspect ratio $O(n)$.
A simple lower bound shows that this is tight.

\item The previous result does not extend to general directed or undirected graphs; in fact, the answer turns out to be \emph{exponential} in these settings.
In particular, we construct directed and undirected $n$-node graphs for which any shortest-paths preserving graph has aspect ratio $2^{\Omega(n)}$.
\end{itemize}

We also consider the \emph{approximate} version of this problem, where the goal is for shortest paths in $H$ to correspond to approximate shortest paths in $G$. We show that our exponential lower bounds extend even to this setting. We also show that in a closely related model, where approximate shortest paths in $H$ must also correspond to approximate shortest paths in $G$, even DAGs require exponential aspect ratio. 
}

\pagenumbering{gobble}

%%%%%%%%%%%%%%%%
%%% TABLE OF CONTENTS GOES BELOW
%%%%%%%%%%%%%%

%\clearpage
%\tableofcontents
%\clearpage

%%%% input files here

\clearpage
\pagenumbering{arabic}
\section{Introduction}

%\nicole{I like this start to the paper}
In modern graph algorithms, a popular strategy is to \emph{simplify} graphs in preprocessing before proceeding to the main part of the algorithm.
Given an input graph $G$, the high-level goal is to reduce to solving the problem on a different graph $H$ that faithfully encodes the important structural properties of $G$ (and hence solving the problem over $H$ instead of $G$ yields a good solution), but where $H$ has improved complexity measures relative to $G$ in whatever sense is important for the runtime of the algorithm to follow.
Some successful examples of this method in the literature include:

\begin{itemize}
\item \textbf{Edge Sparsifiers}, where the goal is for $H$ to have substantially fewer edges than $G$.
Well-studied examples of edge sparsifiers include spanners \cite{Pu89jacm,PU89sicomp}, preservers \cite{CE06,AB18}, flow/cut/spectral sparsifiers \cite{ST11,BK96}, etc.

\item \textbf{Vertex Sparsifiers}, where the goal is for $H$ to have substantially fewer vertices than $G$.
Well-studied examples of vertex sparsifiers include mimicking networks \cite{CDKLLPSV21, HKNR98}, terminal minor sparsifiers \cite{KNZ14}, etc. 

\item \textbf{Hopsets and Shortcut Sets}, in which $H$ has smaller \emph{hop-diameter} than $G$ while preserving its shortest path distances or reachabilities.
Hop-diameter is an important complexity measure in parallel or distributed contexts \cite{Thorup92,UY91}.

\item \textbf{Expander Decomposition}, which reduces to solving problems on graphs with a favorable \emph{expansion} parameter \cite{KVV04,ST11}.
%This is also related to low-diameter decomposition methods \cite{?, ?}.
\end{itemize}

This paper introduces a new paradigm for graph simplification, in which the goal is to minimize the complexity measure of \emph{aspect ratio}.
The aspect ratio of a graph is the multiplicative spread among its edge weights:

\begin{definition} [Aspect Ratio]
Let $G = (V, E, w)$ be a graph with positive edge weights.
Then its aspect ratio is the quantity $\frac{\max_e w(e)}{\min_e w(e)}$.
\end{definition}

Aspect ratio often enters the picture for graph algorithms where the state-of-the-art solutions for \emph{unweighted} graphs substantially outperform the solutions for \emph{weighted} graphs.
For these problems, one can sometimes extend the unweighted solution to graphs that are ``close to unweighted.''
Specifically, this could mean either (1) that the algorithm extends to the setting of weighted input graphs, but that the runtime or solution quality suffers a dependence on aspect ratio (say $\polylog r$, for an input graph of aspect ratio $r$), or (2) more simply, that the algorithm only extends to the setting of weighted input graphs of bounded aspect ratio (say $\poly(n)$, for $n$-node input graphs).
These paradigms are especially common for problems related to computation of shortest paths or distances, for several reasons: when aspect ratio is bounded one can use bucketing methods to group shortest paths by length, and algebraic techniques based on fast matrix multiplication only give speedups when the associated matrix has bounded entries.
Some concrete examples of shortest-path-related problems that have incurred a dependence on aspect ratio for these reasons
include $(1+\eps)$-hopsets \cite{BW23, KP22a}, roundtrip spanners \cite{RTZ08, ZL18, CDG20}, the All-Pairs Shortest Paths (APSP) problem \cite{Zwick02, SZ99}, dynamic shortest paths (e.g. \cite{HenzingerKN14,BernsteinGW20,ChuzhoyZ23}), distributed shorted paths (e.g. \cite{ForsterN18,CaoFR21}), and many others.
In light of this, we believe that the following two questions are natural:

\paragraph{Main Questions (Informal):}
\begin{itemize}
\item \emph{Is it generally possible to decrease the aspect ratio of a graph without changing the structure of its shortest paths?}

\item \emph{Do graphs of bounded aspect ratio have more structured shortest paths than general graphs?}
\end{itemize}

A positive resolution to either question would shed light on aspect-ratio-sensitive graph algorithms.
In fact, as we point out next, some kind of structure is guaranteed: exactly one of the two questions must be answerable in the affirmative.
To explain this point, let us formalize our model:

% For example, let $G$ be an $n$-node input graph, and suppose our goal is to solve APSP, encoding and returning the shortest path system induced by $G$.
% We could envision finding another graph $H$ in preprocessing, which induces the same shortest path system as $G$ but which has much smaller aspect ratio.

\begin{definition}[Shortest-Paths Preserving Graph -- See Figure \ref{fig:shppres}] \label{def:shpequiv}
    Given a graph $G = (V, E, w)$, a reweighted graph on the same vertex and edge set $H = (V, E, w_H) $ is \emph{shortest-paths preserving} if, for every shortest path $\pi$ in $H$, the sequence of nodes and edges along $\pi$ is also a shortest path in $G$.
    %\footnote{If $G$ has several tied-for-shortest $s \leadsto t$ paths, then this definition only requires at least one of these paths to remain shortest in $H$, and the others may become non-shortest.  This is the natural model if we envision computing and returning a canonical shortest path for each pair of nodes in $H$, and we desire this solution to also be valid for $G$.}
    \footnote{If $G$ has several tied-for-shortest $s \leadsto t$ paths, one could consider two different models: that \emph{at least one} shortest $s \leadsto t$ path in $G$ must be shortest in $H$ (which is Definition \ref{def:shpequiv}), or that \emph{all} shortest $s \leadsto t$ paths in $G$ must be shortest in $H$.  All of our upper and lower bounds work for both of these definitions, so we somewhat arbitrarily use the first one.}
 
\end{definition}

% Since there are only finitely many shortest path systems induced by $n$-node graphs, there must be some upper bound $f(n)$ such that we can always find a shortest-paths preserving graph $H$ with aspect ratio $f(n)$ or less.
% Our main question in this work is whether this function $f$ is small enough to make the method viable.

\begin{figure}[h]
\centering
\begin{tikzpicture}[scale=.8]
\draw [fill=black] (0, 0) circle [radius=0.15];
\draw [fill=black] (2, -2) circle [radius=0.15];
\draw [fill=black] (2, 2) circle [radius=0.15];
\draw [fill=black] (4, 0) circle [radius=0.15];
\draw [ultra thick] (0, 0) -- (2, -2) -- (4, 0) -- (2, 2) -- (0, 0) -- (4, 0);

\node at (2, 0.3) {$500$};
\node at (0.7, 1.3) {$97$};
\node at (0.7, -1.3) {$53$};
\node at (3.3, 1.3) {$5$};
\node at (3.3, -1.3) {$83$};

\node at (2, -3) {\Huge $G$};

\draw [ultra thick, ->] (5, 0) -- (7, 0);

\begin{scope}[shift={(8, 0)}]
\draw [fill=black] (0, 0) circle [radius=0.15];
\draw [fill=black] (2, -2) circle [radius=0.15];
\draw [fill=black] (2, 2) circle [radius=0.15];
\draw [fill=black] (4, 0) circle [radius=0.15];
\draw [ultra thick] (0, 0) -- (2, -2) -- (4, 0) -- (2, 2) -- (0, 0) -- (4, 0);

\node at (2, -3) {\Huge $H$};

\node at (2, 0.3) {$4$};
\node at (0.7, 1.3) {$1$};
\node at (0.7, -1.3) {$2$};
\node at (3.3, 1.3) {$1$};
\node at (3.3, -1.3) {$1$};
\end{scope}
\end{tikzpicture}
\caption{\label{fig:shppres} Here $H$ has the same shortest paths as $G$, but its aspect ratio is improved from $100$ to $4$.}
\end{figure}
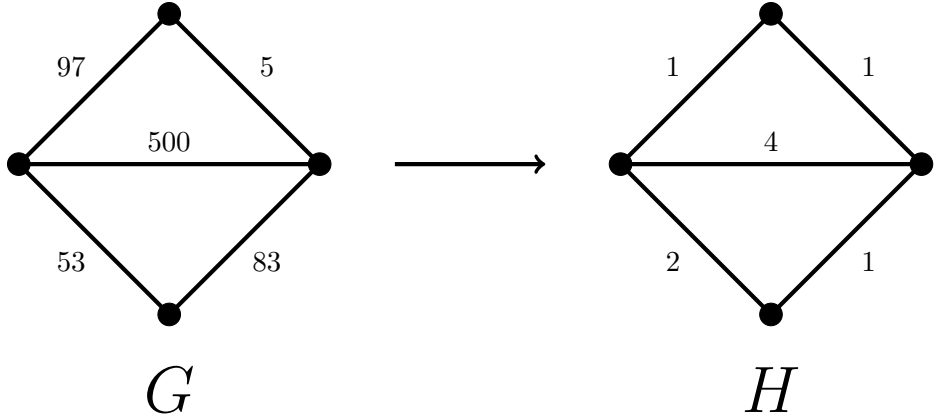

For an example of a potential use case for shortest-paths preservers, we would get a valid solution to APSP on a graph $G$ if we instead computed APSP over a shortest-paths preserving graph $H$, and then used the output as a solution for $G$ (and we could imagine leveraging the improved aspect ratio of $H$ to improve the computation).
We can restate our main questions formally using this definition:

\paragraph{Main Questions (Formal):}
\begin{itemize}
\item \emph{Does every $n$-node graph $G$ admit a shortest-paths preserving graph $H$ with aspect ratio $\poly(n)$?}

\item \emph{Are there $n$-node graphs $G$ whose shortest-paths system\footnote{Formally, the ``shortest-paths system" associated to a graph $G$ is the set of all vertex sequences that form shortest paths in $G$ \cite{CL22, CL23, Bodwin19}.} cannot be realized by any graph of aspect ratio $\poly(n)$?}
\end{itemize}

We can now formally point out that these questions are complements: a graph $G$ that resists a low aspect ratio shortest-paths preserver is precisely one that cannot be realized by a low aspect ratio graph.
Hence, such graphs (if they exist) imply that restricting to the setting of low aspect ratio confers additional structure on the shortest path systems to be analyzed.
Our choice to focus on aspect ratio $\poly(n)$ comes from the fact that, for many problems, the specific dependence on aspect ratio $W$ is $\polylog W$ (e.g. \cite{BW23, KP22a, RTZ08, ZL18, CDG20,CaoFR21,ForsterN18,HenzingerKN14,ChuzhoyZ23}), and so this translates to a $\polylog(n)$ dependence.

In addition to the potential algorithmic insights surveyed above, this paper fits into a recent line on the combinatorics of \emph{shortest path structures} \cite{AW20, AW23, CL22, CL23, Bodwin19}; it can be viewed as an investigation of the extremal shortest path systems that can only be induced by high aspect ratio graphs.
From this standpoint, an upper bound on aspect ratio would represent a new fact about the combinatorics of shortest path systems.
Meanwhile, lower bounds would imply that \emph{low-aspect-ratio} shortest path systems are fundamentally different objects than \emph{general} shortest path systems, meaning their combinatorial properties could be independently investigated.
We discuss this perspective in more depth in Section \ref{sec:shpstructure}.
More generally, there is virtually no current understanding of the interaction between shortest paths structure and aspect ratio, and the goal of this paper is to initiate this line of study. 

% On the other hand, if it turns out that there exist graphs whose shortest paths structure forces high aspect ratio, then the message is that \emph{low-aspect-ratio} shortest path structures are more restricted than \emph{general} shortest path systems (since not all shortest path systems can be realized by a low aspect ratio graph).
% It is somewhat common in graph algorithms to assume that the input graph has low aspect ratio (typically $\poly(n)$) -- and if this assumption fundamentally narrows the set of possible shortest path structures, then this would be an important fact to understand and prove.

\subsection{Exact Shortest Path Preservers}

A priori, our main questions can be asked independently in different graph classes: e.g., we could envision settling the dichotomy differently for directed graphs, undirected graphs, DAGs, etc.
Our findings are that this is in fact the case: the answer does in fact change between these three graph classes.

%We will consider several versions of the problem of minimizing aspect ratio while preserving shortest paths.
%These differ by (1) the kind of input graph considered (directed, undirected, or DAG), and (2) whether we require the reweighted graph $H$ to preserve \emph{exact} or \emph{approximate} shortest paths.
%The simplest version is the one where $H$ preserves exact shortest paths of $G$, so we will begin our discussion here.

Let us begin our discussion in the setting of DAGs.
As a warmup observation, there is a simple way to construct $n$-node graphs that require aspect ratio $\Omega(n)$ for any shortest-paths preserving graph (see Figure \ref{fig:intropathlb}). The lower bound is a path of $n$ nodes and $n-1$ edges of weight $1$, plus an additional long edge of weight $n$ that jumps from the start to the end of the path.
Since the path of unit-weight edges is a shortest path, in reweighting, one cannot reduce the weight of the long edge below $n-1$ times the minimum edge weight on the path.
This implies that any shortest-paths preserving graph has aspect ratio $\Omega(n)$.

\usetikzlibrary{calc}
\begin{figure}[h]
\begin{center}
\begin{tikzpicture}
    % Number of middle nodes
    \newcommand{\numnodes}{7}

    % Draw nodes
    \fill (0, 0) circle (2pt) node (start) {};
    \foreach \i in {0,...,\numnodes} {
        \node at (\i, 0) (node\i) {};
        \fill (node\i) circle (2pt); % Draw a small circle for each middle node
    }

    % Draw straight edges with arrowheads and replace middle edge with ellipsis
    \foreach \i in {1,...,\numnodes} {
        \ifnum\i=4 % Check for the middle edge
            \node at ($(node\the\numexpr\i-1\relax)!0.5!(node\i)$) {$\cdots$}; % Place ellipsis
        \else
            \draw[->] (node\the\numexpr\i-1\relax) -- node[midway, below] {$1$} (node\i);
        \fi
    }

    % Draw curved edge from start to end, above the path
    \draw [->, bend left] (node0) to node[midway, above] {$n$} (node\numnodes);
\end{tikzpicture}

\end{center}
\caption{\label{fig:intropathlb} Any shortest-paths preserving reweighting of this graph has aspect ratio $\Omega(n)$.}
\end{figure}
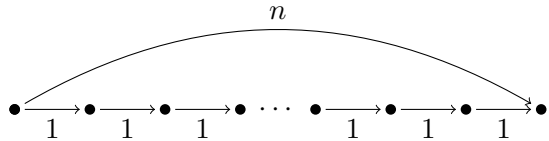

Our first result is that this simple lower bound is actually tight, over the class of DAGs:

\begin{restatable}[Linear upper bound for DAGs]{theorem}{ub}\label{thm:ub}
    Every $n$-node DAG has a shortest-paths preserving graph of aspect ratio $O(n)$.
\end{restatable}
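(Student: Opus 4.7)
The plan is to exploit the DAG's topological order through a simple ``telescoping'' trick: define $w_H$ as a topological-index gap plus a small multiple of $w_G$, so that the first term collapses along any fixed-endpoint path while the second term forces the shortest $H$-path to coincide with the shortest $G$-path.

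\textbf{Construction.} Fix any topological ordering $v_1,\dots,v_n$ of $V$, so that every edge $(v_i,v_j) \in E$ has $i<j$, and let $W := \max_{e \in E} w_G(e)$. For each edge $(v_i,v_j) \in E$, define
\[
   w_H(v_i,v_j) \;:=\; (j-i) \;+\; \tfrac{1}{W}\, w_G(v_i,v_j).
\]

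\textbf{Preservation.} For any $s \to t$ path $\pi = v_{a_0}, v_{a_1},\dots,v_{a_k}$ (with $s = v_{a_0}$ and $t = v_{a_k}$), the index-gap summand telescopes:
\[
   w_H(\pi) \;=\; \sum_{\ell=0}^{k-1}\bigl[(a_{\ell+1}-a_\ell) + \tfrac{1}{W}\, w_G(v_{a_\ell},v_{a_{\ell+1}})\bigr] \;=\; (a_k - a_0) \;+\; \tfrac{1}{W}\, w_G(\pi).
\]
Because $a_k - a_0$ depends only on $s$ and $t$ and not on the intermediate vertices, minimizing $w_H(\pi)$ over all $s \to t$ paths is equivalent to minimizing $w_G(\pi)$. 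Thus the shortest $s \to t$ paths of $H$ coincide exactly with the shortest $s \to t$ paths of $G$, which is strictly stronger than what Definition~\ref{def:shpequiv} requires.

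\textbf{Aspect ratio, and what is hard.} For every edge, $1 \le j-i \le n-1$ and $0 < \tfrac{1}{W}\, w_G(e) \le 1$, so every edge of $H$ has weight in $(1, n]$, giving aspect ratio at most $n = O(n)$. There is essentially no technical obstacle once the telescoping trick is spotted; construction, preservation, and aspect ratio bound are all one-liners. The genuinely hard part of the paper lies elsewhere, namely in showing that no analogous trick can save directed or undirected graphs---where cycles destroy the clean topological potential and force exponential blowup---and in verifying that the $O(n)$ bound here is tight up to constants, matching the lower bound of Figure~\ref{fig:intropathlb}.
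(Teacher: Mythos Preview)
Your proof is correct and takes essentially the same approach as the paper: both add a topological-index potential that telescopes along any fixed-endpoint path, scaled so that the potential term dominates the original weights. Your weight function $w_H(v_i,v_j) = (j-i) + \tfrac{1}{W}\,w_G(v_i,v_j)$ is just a $\tfrac{1}{W}$-rescaling of the paper's $w_H(v_i,v_j) = w_G(v_i,v_j) + W\cdot(j-i)$, which is immaterial since aspect ratio is scale-invariant.
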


% Our theorem in the body of the paper is actually a bit stronger.
% In the setting where the input graph $G$ has ties between many equally-shortest $s \leadsto t$ paths, a shortest-paths \emph{preserving} graph $H$ is only required to keep one (or more) of these as an $s \leadsto t$ shortest paths. \aaron{I would prefer to remove this paragraph to maintain momentum.}
% In fact, we show that every $n$-node DAG has a shortest-paths \emph{equivalent} graph of aspect ratio $O(n)$, which preserves \emph{all} $s \leadsto t$ shortest paths for all nodes $s, t$.\gbnote{added this paragraph}
%\nicole{I think we should note the feature that this linear upper bound does in fact preserve all (possibly exponential number of) shortest paths, even tied ones}

The fact that nontrivial aspect ratio upper bounds are possible might give hope that they generalize to an upper bound of $O(n)$, or perhaps $\poly(n)$, for \emph{all} directed graphs.
However, our next result refutes this possibility: DAGs are special, and in general \emph{exponential} aspect ratio is sometimes necessary. That is, despite the fact that only $n^2$ shortest paths need to be preserved, we discover that the interactions between these paths can compound on one another to force a chain reaction of edges with multiplicatively larger and larger weights. The consequence is a graph with ``inherently'' exponential aspect ratio.

\begin{restatable}[Exponential lower bound for directed graphs]{theorem}{lbdir} \label{thm:lbdir}
    There are $n$-node directed graphs $G$ such that any shortest-paths preserving graph $H$ has aspect ratio $2^{\Omega(n)}$.
\end{restatable}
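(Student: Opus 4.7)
My plan is to exhibit an explicit family of $n$-vertex directed graphs $G_n$ whose shortest-paths systems cannot be realized by any reweighting of aspect ratio $2^{o(n)}$. The strategy is a ``gadget and chain'' approach. The core object is a constant-sized \emph{amplification gadget}: a directed graph $D$ on $O(1)$ vertices with two distinguished edges $e_{\mathrm{in}}, e_{\mathrm{out}}$ and a prescribed shortest-paths structure such that, in every shortest-paths preserving reweighting $H$ of $D$, the inequality $w_H(e_{\mathrm{out}}) \geq 2 \cdot w_H(e_{\mathrm{in}})$ is forced. Then $G_n$ is built by concatenating $k = \Theta(n)$ disjoint copies $D_1, \ldots, D_k$, identifying (or gluing) $e_{\mathrm{out}}(D_i)$ with $e_{\mathrm{in}}(D_{i+1})$. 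This yields a chain of edges $e_1, \ldots, e_k$ satisfying $w_H(e_i) \geq 2 w_H(e_{i-1})$, hence $w_H(e_k)/w_H(e_1) \geq 2^{k-1} = 2^{\Omega(n)}$, which immediately gives the claimed aspect ratio lower bound.

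A crucial structural observation is that the gadget $D$ \emph{must contain a directed cycle}. Indeed, if $D$ were acyclic, then \Cref{thm:ub} applied to $D$ alone would give a shortest-paths preserving reweighting of aspect ratio $O(|V(D)|) = O(1)$, precluding any super-constant amplification. So the gadget must leverage cycles to couple the weights of $e_{\mathrm{in}}$ and $e_{\mathrm{out}}$ multiplicatively. The intended mechanism is to arrange several shortest paths through $D$ that simultaneously (i) forbid $e_{\mathrm{out}}$ from being too light relative to a path that threads through $e_{\mathrm{in}}$ twice-over via a cycle-based equality constraint of the form $\sum_{e \in P_1} w_H(e) = \sum_{e \in P_2} w_H(e)$ for two tied-for-shortest paths $P_1, P_2$, and (ii) forbid $e_{\mathrm{in}}$ from being too heavy relative to the rest of the gadget. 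The combination of a cycle-induced equality and a ``forbidden shortcut'' inequality is what converts additive constraints into a multiplicative one.

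\emph{Verification and chaining.} Once the gadget is in place, I would verify that the chaining does not introduce unintended long-range shortest paths that could relax the per-gadget constraints. If needed, I would insert short ``insulating'' substructures (such as unit-weight detours) between consecutive copies to ensure that every shortest path of $G_n$ lies inside some $D_i$ or inside two consecutive glued gadgets, so that the per-gadget analysis composes cleanly. The final aspect-ratio calculation is then immediate from the doubling chain $w_H(e_i) \geq 2 w_H(e_{i-1})$.

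\emph{Main obstacle.} The hard step is designing the amplification gadget itself. The DAG upper bound tells us that no ``easy'' acyclic configuration will work, so the gadget must carefully combine cycle-based equality constraints with forbidden-shortcut inequalities in a way that forces a constant-factor blow-up on a single distinguished edge. A secondary difficulty is the composition argument: arguing that a long concatenation of cyclic gadgets still has exactly the shortest-paths structure we prescribed, without emergent global shortcuts that trivialize the individual constraints.
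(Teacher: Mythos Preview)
Your high-level plan---chain $\Theta(n)$ constant-size cyclic gadgets so that each forces a constant-factor weight amplification, yielding $2^{\Omega(n)}$ aspect ratio---is exactly the strategy the paper uses. You also correctly observe that the gadget must contain a directed cycle, by appeal to the DAG upper bound. So the skeleton is right.

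However, the proposal has a genuine gap: you have not actually designed the amplification gadget, and you explicitly flag this as the ``hard step.'' The paper's contribution is precisely this gadget, and the mechanism is somewhat different from what you sketch. Two points in particular:

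\textbf{(1) The amplified quantity is not a single edge.} You frame the gadget as forcing $w_H(e_{\mathrm{out}}) \geq 2\, w_H(e_{\mathrm{in}})$ on two distinguished edges, and then glue edges. The paper instead takes the gadget to be a directed $3$-cycle $C_i$ and proves that the \emph{total cycle weight} doubles: $w_H(C_i) > 2\, w_H(C_{i+1})$. Consecutive gadgets are connected not by edge identification but by three zero-weight ``cross-cycle'' edges, and the cycles alternate orientation (forward, backward, forward, \dots). This alternation is what makes a two-edge path in $C_{i+1}$ correspond to a one-edge path in $C_i$ between matched cross endpoints.

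\textbf{(2) The mechanism is summed strict inequalities, not tied-path equalities.} You propose combining a ``cycle-induced equality'' (two tied-for-shortest paths) with a forbidden-shortcut inequality. The paper does not use ties at all. For each of the three consecutive pairs on $C_i$ there is a strict inequality ``(one edge of $C_i$) $+$ (cross edge) $>$ (cross edge) $+$ (two edges of $C_{i+1}$).'' Summing all three inequalities, the cross-cycle edge weights appear identically on both sides and cancel, leaving $w_H(C_i) > 2\, w_H(C_{i+1})$. This cancellation trick is the crux; it also means no ``insulating substructures'' are needed, since each constraint is already local to two consecutive cycles and no long-range shortest path can interfere (cross edges only go from $C_i$ to $C_{i+1}$).

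In short, your plan is on the right track but stops before the actual construction; the missing idea is the alternating $3$-cycles plus the sum-and-cancel argument on cycle totals, rather than an equality/inequality pair on individual edges.
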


Next, we consider the undirected setting.
The undirected setting is at least as easy as the directed setting: there is a black-box reduction showing that, if every directed graph has a shortest-paths preserver of aspect ratio at most $\alpha(n)$, then the same is true for undirected graphs \cite{Bodwin19}.\footnote{To sketch the reduction: given an undirected graph $G$, convert it to a directed graph $G'$ by replacing each undirected edge $\{u, v\}$ with both directed edges $(u, v), (v, u)$.  Let $H'$ be a shortest-paths preserving graph of $G'$ with aspect ratio $\alpha(n)$.  Convert $H'$ to an undirected graph $H$ by recombining edges $(u, v), (v, u)$ into a single undirected edge $\{u, v\}$ of weight $w_{H'}(u, v) + w_{H'}(v, u)$. 
 One can calculate that this recombination step does not increase aspect ratio, and it preserves shortest paths.} 
Nonetheless, we extend our lower bound to show that exponential aspect ratio is necessary for general undirected graphs as well.

\begin{restatable}[Exponential lower bound for undirected graphs]{theorem}{lbundir} \label{thm:lbundir}
    There are $n$-node undirected graphs $G$ such that any shortest-paths preserving graph $H$ has aspect ratio $2^{\Omega(n)}$.
\end{restatable}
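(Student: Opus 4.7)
The plan is to construct an explicit $n$-vertex undirected graph $G$ whose shortest-paths system requires any reweighting to have aspect ratio $2^{\Omega(n)}$. A tempting first attempt would be to simply ``undirect'' the directed lower bound graph from Theorem~\ref{thm:lbdir}, replacing each directed edge with an undirected one of the same weight. However, this generally fails: in the undirected version, edges may now be traversed in both directions, creating new shortest paths (between pairs of vertices that were previously mutually unreachable) that impose strictly weaker constraints on the reweighting and could potentially be satisfied with low aspect ratio. Hence a direct construction tailored to the undirected setting is needed.

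I would design a \emph{layered} undirected graph $G$ whose vertex set admits a natural linear ordering $v_0, v_1, \ldots, v_N$ with $N = \Theta(n)$, such that every edge $\{v_i, v_j\}$ satisfies $i < j$ and thus has a canonical ``forward'' direction. The goal is to ensure two properties: (a) every shortest path in $G$ is \emph{monotone} with respect to the ordering, visiting vertices in strictly increasing (or, by symmetry, strictly decreasing) index order; and (b) embedded in the graph is a chain of ``doubling sub-gadgets'' that together force a distinguished sequence of edges $e_1, e_2, \ldots, e_n$ to satisfy $w_H(e_{i+1}) \geq 2\, w_H(e_i)$ in any valid reweighting $H$. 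Each sub-gadget can be built from a small constant-sized structure (for instance, a triangle in which one edge must equal the sum of the other two because the shortest path between its endpoints is forced to pass through the apex), with the gadgets chained so that the lower bound on $w_H(e_i)$ propagates to a lower bound on $w_H(e_{i+1})$. Iterating gives $w_H(e_n) \geq 2^{n-1}\, w_H(e_1)$, yielding the claimed exponential aspect ratio.

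The main obstacle is establishing property (a). In the directed case this is trivial, since edges can only be used in one direction, but in the undirected case we must rule out ``shortcut'' walks that zig-zag or traverse edges backward, since such walks might introduce looser constraints that a low aspect ratio reweighting could satisfy. I plan to handle this by carefully tuning the edge weights in $G$ so that for any pair of vertices the shortest walk is monotone, and to verify this inductively from the local structure of each gadget. If necessary, I will introduce additional auxiliary ``blocker'' edges or vertices whose sole purpose is to make any non-monotone traversal strictly longer than the corresponding monotone one, while leaving the doubling-constraint chain intact. Once both properties (a) and (b) are in place, the exponential lower bound on aspect ratio follows immediately from the chain of doubling constraints, completing the proof.
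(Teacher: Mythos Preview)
Your plan has a fatal flaw: property~(a) is self-defeating. If an undirected graph $G$ admits a linear ordering $v_1,\dots,v_n$ under which \emph{every} shortest path is monotone, then $G$ automatically has a shortest-paths preserving reweighting of aspect ratio $O(n)$, by exactly the price-function trick used in the paper's DAG upper bound (Theorem~\ref{thm:ub}). Concretely, set $w_H(\{v_i,v_j\}) = w_G(\{v_i,v_j\}) + W\,|j-i|$ where $W=\max_e w_G(e)$. For any path $P$ the added cost is $W$ times the total variation of the index sequence along $P$; this equals $W|b-a|$ for monotone $a\leadsto b$ paths and is strictly larger otherwise. Hence among paths with fixed endpoints the additive penalty is minimized exactly on monotone paths, and on those paths $w_H$ and $w_G$ differ by a constant. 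It follows that every $H$-shortest path is a $G$-shortest (monotone) path, and the aspect ratio of $w_H$ is at most $n$. So any graph for which you succeed in establishing~(a) cannot require exponential aspect ratio, and your doubling chain~(b) is irrelevant.

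This is precisely why the paper's construction does \emph{not} have monotone shortest paths: it is built from genuine undirected $5$-cycles $C_1,\dots,C_{n/5}$, which have no linear ordering making all their shortest paths monotone. The cross-cycle edges are arranged so that the relevant shortest path from $C_i$ to $C_{i+1}$ uses two edges of $C_{i+1}$, while a competing non-shortest path uses one edge of $C_i$; summing the resulting inequalities over the five vertices of $C_i$ yields $w_H(C_i)>2\,w_H(C_{i+1})$ in any preserver $H$. The cycles --- the very non-monotonicity you are trying to eliminate --- are what block the price-function reweighting and make the exponential lower bound possible. Your instinct that ``zig-zag'' paths only introduce \emph{looser} constraints is backwards: their presence is what prevents the easy upper bound.
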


\subsection{Shortest Path Preservers with Stretch}

Given our hardness results for general graphs, it is next natural to investigate whether they can be overcome by allowing an approximation factor.
We will consider the following model:
\begin{definition}[$\alpha$-stretch shortest-paths preservering graph]
    Given a graph $G=(V,E,w_G)$, a reweighted graph $H=(V,E,w_H)$ is \emph{$\alpha$-stretch shortest-paths preserving} if every shortest path in $H$ is also an $\alpha$-approximate shortest path in $G$.
    That is, if $\pi$ is an $s \leadsto t$ shortest path in $H$, then $w_G(\pi) \le \alpha \cdot \dist_G(s, t).$
\end{definition}

This definition is generally weaker than the (exact) shortest-paths preserving graphs discussed previously, which is the special case of stretch $1$.
It represents a natural attempt to bypass the lower bounds in Theorems \ref{thm:lbdir} and \ref{thm:lbundir}: can we reduce aspect ratio to $\poly(n)$ in a more forgiving model, where we allow a stretch factor $\alpha$ in the shortest paths of $G$?

We describe modifications of our lower bounds that are robust even to this approximate version of the problem.
In the directed setting, we construct graphs that force exponential aspect ratio for \emph{any} finite $\alpha$ (even depending on $n$):

\begin{restatable}[Exponential lower bound for approximation in directed graphs]{theorem}{lbdirap}\label{thm:lbdirap}
   For any $\alpha$, there are $n$-node directed graphs $G$ such that any $\alpha$-stretch shortest-paths preserving graph $H$ has aspect ratio $2^{\Omega(n)}$.
\end{restatable}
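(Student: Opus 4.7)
The strategy is to amplify the construction of Theorem~\ref{thm:lbdir} so that every alternative path that was merely \emph{strictly longer} than a shortest path in the exact setting instead becomes \emph{more than $\alpha$ times longer}. Once this amplification is in place, the same chain of edge-weight inequalities that produced aspect ratio $2^{\Omega(n)}$ in the exact case still applies here: any $s \leadsto t$ path whose $G$-weight exceeds $\alpha \cdot d_G(s,t)$ cannot be a shortest path in any $\alpha$-stretch preserver $H$, so its $H$-weight must strictly exceed the $H$-weight of the corresponding shortest alternative.

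First, I would revisit the construction from Theorem~\ref{thm:lbdir} and isolate the sequence of ``witness pairs'' $(\pi_i, \pi'_i)$ of paths sharing endpoints, where each $\pi_i$ is a shortest path in $G$ and each $\pi'_i$ is the alternative whose strict non-shortness in $G$ produces a constraint on $H$. The constraints ``$w_H(\pi'_i) > w_H(\pi_i)$'' chain through shared edges to compound multiplicatively and yield $2^{\Omega(n)}$ aspect ratio.  Second, I would modify $G$ by inflating weights along the $\pi'_i$ side (or shrinking the $\pi_i$ side, or subdividing and rescaling) so that $w_G(\pi'_i) > \alpha \cdot w_G(\pi_i)$ for every $i$.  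Third, I would verify that after this reweighting the set of shortest paths in $G$ is unchanged, that every $\pi'_i$ fails to be an $\alpha$-approximate shortest path, and hence that the same chain of constraints still applies to any $\alpha$-stretch preserver $H$, forcing aspect ratio $2^{\Omega(n)}$.

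The main obstacle is that edges in the original construction are very likely \emph{reused} across different witness pairs: a single edge may lie on one $\pi_i$ and simultaneously on some $\pi'_j$. Naively scaling weights could destroy the shortness of some $\pi_i$ or accidentally turn a non-shortest alternative into a shortest path elsewhere in the graph. To get around this I would either (a)~formulate the amplified construction as a feasibility problem over a system of linear inequalities in the edge weights, and argue feasibility by exhibiting an explicit geometrically-growing solution, or (b)~``split'' each shared edge by replacing it with a small gadget of $O(\log \alpha)$ nodes that distinguishes its shortest-path role from its alternative-path role, so that weights along the two roles can be inflated independently.  In either case the number of nodes grows by at most a factor depending on $\alpha$, which for fixed $\alpha$ preserves the $n$-node size and the $2^{\Omega(n)}$ aspect ratio bound (with the constant hidden in $\Omega(\cdot)$ depending on $\alpha$).
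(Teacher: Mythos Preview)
Your high-level plan is exactly right, and option~(a) in particular is what the paper does --- but the ``obstacle'' you flag is a non-issue in the actual construction, and the fix is much simpler than you anticipate. Recall that in the exact lower bound, the cycle edges in $C_i$ all have weight $1/3^i$ and the cross-cycle edges have weight $0$; the key comparison is between a blue path of weight $2/3^{i+1}$ and a red path of weight $1/3^i$. To amplify the gap by a factor of $\alpha$, the paper simply replaces the base $3$ by $3\alpha$: cycle edges in $C_i$ now have weight $1/(3\alpha)^i$. Then the red path has weight $1/(3\alpha)^i > \alpha \cdot 2/(3\alpha)^{i+1}$, i.e.\ more than $\alpha$ times the blue path, so the blue path is the \emph{only} $\alpha$-approximate shortest path. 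Because the edge weights were already a single geometric sequence indexed by cycle, there is no conflict from edge reuse: scaling the base uniformly handles all witness pairs at once. The chain of $H$-constraints (Lemma~\ref{lem:aspectdir}) is then invoked verbatim, yielding aspect ratio $2^{\Omega(n)}$ with the hidden constant \emph{independent of $\alpha$}.

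By contrast, your option~(b) --- inserting $O(\log\alpha)$-node gadgets --- would inflate the vertex count by a factor depending on $\alpha$ and thus make the constant in $2^{\Omega(n)}$ degrade with $\alpha$. That is strictly weaker than what the paper obtains (and the theorem as stated allows $\alpha$ to be arbitrary, even superconstant in $n$). So you should drop option~(b) entirely and recognize that option~(a) is not an abstract LP feasibility question but a one-parameter adjustment of the existing weight scheme.
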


For undirected graphs, we are able to modify our construction to obtain an exponential lower bound for the $(1+\eps)$-approximate version:
%lower bound is a bit weaker: here we are only able to refute the possibility of $(1+\eps)$-stretch shortest-paths preserving graphs with subexponential aspect ratio.
%\nicole{The previous sentence sounds like it could be apologetic. Possible rewording: ``For undirected graphs, we are also able to modify our construction to obtain an exponential lower bound for this approximate version. In contrast to the directed version, our undirected construction only works for small constant stretch.''}

\begin{restatable}[Exponential lower bound for approximation in undirected graphs]{theorem}{lbundirap} \label{thm:lbundirap}
    For any $\eps \le 1/12$, there are $n$-node undirected graphs $G$ such that any $(1+\eps)$-stretch shortest-paths preserving graph $H$ has aspect ratio $2^{\Omega(n)}$.
\end{restatable}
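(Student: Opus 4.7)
The plan is to adapt the construction used in Theorem~\ref{thm:lbundir} so that the multiplicative gap present in each ``forcing'' step is enlarged by a universal constant, providing enough slack to remain valid under a $(1+\eps)$-stretch relaxation. The proof of Theorem~\ref{thm:lbundir} produces an $n$-node undirected graph $G$ together with a sequence of $\Omega(n)$ distinguished edges $e_1,e_2,\dots$ whose weights in any shortest-paths preserving $H$ satisfy a recurrence of the form $w_H(e_{i+1}) \ge c \cdot w_H(e_i)$ for some absolute constant $c>1$, which chains up to aspect ratio $2^{\Omega(n)}$. Each inequality in the chain comes from a specific vertex pair $(s_i,t_i)$ whose unique shortest path in $G$ does not use $e_{i+1}$, forcing $e_{i+1}$'s $H$-weight to exceed the total $H$-weight of the preferred path.

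The first step is to replay the same construction with all weight ratios boosted: choose relative $G$-edge-weights so that, for each forcing pair $(s_i,t_i)$, the ``forbidden'' edge $e_{i+1}$ would have to shortcut a path whose $G$-length is at least a $\beta$-factor larger than the shortest $G$-path, for some $\beta > 1 + \eps$. With $\eps \le 1/12$, it suffices to take $\beta = 3/2$, say; this only changes the base of the exponential by a constant and preserves an $n$-node bound. Under $(1+\eps)$-stretch, the definition requires that any $H$-shortest $s_i \leadsto t_i$ path have $G$-length at most $(1+\eps) \dist_G(s_i,t_i)$. Consequently, any alternative $s_i \leadsto t_i$ walk in $H$ using $e_{i+1}$ must have $H$-length at least that of the distinguished path, otherwise it would be a shorter $H$-path whose $G$-length is at least $\beta \cdot \dist_G(s_i,t_i) > (1+\eps) \dist_G(s_i,t_i)$, contradicting the stretch bound. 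This recovers a lower bound $w_H(e_{i+1}) \ge c'\cdot w_H(e_i)$ with $c' = c \cdot \beta/(1+\eps) > 1$ a fixed constant, which iterates to an aspect ratio of $2^{\Omega(n)}$.

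The main obstacle is handling the interactions between pairs. In the exact setting, we only need to rule out $e_{i+1}$ as a strict shortcut; in the approximate setting, we additionally have to ensure that no \emph{other} path introduced or unlocked by our rescaling becomes a $(1+\eps)$-approximate shortest path in $G$, since any such path, if chosen as an $H$-shortest path, would be vacuously allowed and could break the forcing chain. I would therefore audit every vertex pair $(s,t)$ used in the argument and verify, using the rescaled $G$-weights, that every non-distinguished $s\leadsto t$ path has $G$-length strictly greater than $(1+\eps)\dist_G(s,t)$; the threshold $\eps \le 1/12$ is precisely tuned to leave enough margin between the distinguished paths and their next-shortest competitors. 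Since the exact construction already separates these alternatives by a constant multiplicative factor, this check reduces to a bounded calculation that holds for all sufficiently small $\eps$, and in particular for $\eps \le 1/12$.

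Finally, I would remark that the directed version of this strategy is exactly how Theorem~\ref{thm:lbdirap} is proved, but in the directed case one has full one-way control and can accommodate any $\alpha$; the limitation $\eps \le 1/12$ in the undirected case reflects the fact that undirected paths can be traversed both ways, which tightens the interaction between competing shortest paths and caps the allowable stretch in our rescaled construction.
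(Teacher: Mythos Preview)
Your proposal has a genuine gap. You assert that ``the exact construction already separates these alternatives by a constant multiplicative factor,'' and that the approximate case therefore reduces to a bounded calculation. This is false. In the exact undirected construction the cross-cycle edges all have weight $1$, so the distinguished path $P$ (one cross-cycle edge plus two edges of $C_{i+1}$) has $w_G(P)=1+2/3^{i+1}$, while the competing path $P'$ (one edge of $C_i$ plus one cross-cycle edge) has $w_G(P')=1+1/3^i$. Their ratio
\[
\frac{w_G(P')}{w_G(P)}=\frac{1+1/3^i}{1+2/3^{i+1}}
\]
tends to $1$ as $i\to\infty$, so for any $\eps>0$ the path $P'$ \emph{is} a $(1+\eps)$-approximate shortest path once $i$ is large enough. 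Your ``audit'' would therefore fail. Moreover, your suggested fix of ``boosting all weight ratios'' (e.g.\ replacing $1/3^i$ by $1/R^i$ for larger $R$) does not help: the cross-cycle weight $1$ still dominates both numerator and denominator, and the ratio still tends to $1$.

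The paper's actual remedy is different and more delicate: it leaves the cycle weights at $1/3^i$ but changes the cross-cycle edge weight between $C_i$ and $C_{i+1}$ from $1$ to $1/3^{i-1}$. This scaling is chosen so that the cross-cycle contribution is comparable to, not dominant over, the cycle-edge contribution. With this rebalancing one computes $w_G(P)=1/3^{i-1}+2/3^{i+1}$, and every competing path $P'$ (whether it uses an edge of $C_i$, a second cross-cycle edge, or the long way around $C_{i+1}$) satisfies $w_G(P')\ge 1/3^{i-1}+1/3^i$, giving $w_G(P')/w_G(P)>13/12$ uniformly in $i$. The balance is two-sided: setting cross-cycle weights too small would reintroduce short detours through $C_{i+2}$ and back, which is why the bound is capped at $\eps\le 1/12$ rather than arbitrary $\eps$. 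Once this uniqueness is established, the doubling argument $w_H(C_i)>2\,w_H(C_{i+1})$ from the exact case applies verbatim (your formula $c'=c\beta/(1+\eps)$ is not needed and is not how the constant arises).
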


We have not attempted to optimize the constant $1/12$ in this theorem: it is almost certainly improvable, but only to a point (e.g.\ our methods will likely not be able to rule out shortest-paths preserving graphs with $\text{poly}(n)$ aspect ratio and stretch $2$).
It is an interesting open question whether one can obtain $\text{poly}(n)$ aspect ratio and $O(1)$ stretch in the undirected setting.

\subsection{Shortest Path Preservers with Two-Sided Stretch}

In the previous settings, we imagine computing \emph{exact} shortest paths over $H$, and then mapping them to exact or approximate shortest paths in $G$.
One might also imagine computing \emph{approximate} shortest paths in $H$, and designing $H$ such that these map to approximate shortest paths in $G$.
The following definition captures the property that $H$ would need to have to enable this method.

\begin{definition}[$(\alpha_H \to \alpha_G)$-stretch shortest-paths preservering graph]
    Given a graph $G=(V,E,w)$, a reweighted graph $H=(V,E,w_H)$ is said to be \emph{$(\alpha_H \to \alpha_G)$-stretch shortest-paths preserving} if every $\alpha_H$-approximate shortest path in $H$ is also an $\alpha_G$-approximate shortest path in $G$.
\end{definition}

Thus, an $\alpha$-stretch preserver in the previous sense is the same as a $(1 \to \alpha)$-stretch preserver under this definition.
We note that the problem is easier (more likely to admit an upper bound) as $\alpha_G$ increases, but harder as $\alpha_H$ increases.
Thus, the previous lower bounds for directed and undirected graphs from Theorems \ref{thm:lbdirap} and \ref{thm:lbundirap} apply also to $(\alpha_H \to \alpha)$-stretch preservers, for any $\alpha_H \ge 1$.
However, the two-sided stretch version is generally incomparable in difficulty to the original (exact shortest paths) version of the problem.\footnote{On one hand, the exact version is harder because an exact shortest path in $G$ must be an exact shortest path in $H$, whereas in the two-sided stretch version, an exact shortest path in $G$ only needs to be an approximate shortest path in $H$.
On the other hand, the two-sided stretch version is harder because \emph{every} approximate shortest path in $H$ must also be an approximate shortest path in $G$, whereas in the exact version there are no constraints on the approximate-but-not-exact shortest paths of $H$.}
Thus, we can study this two-sided stretch question for DAGs: Do DAGs have $(\alpha_H \to \alpha_G)$-stretch preservers with polynomial aspect ratio, for $\alpha_H, \alpha_G > 1$?
Our final result is an answer to this question in the negative: 

\begin{restatable}[Exponential lower bound for two-sided approximation in DAGs]{theorem}{lbdagap} \label{thm:lbdagap}
    For any $\alpha_H,\alpha_G > 1$, there is an $n$-node DAG $G$ such that any $(\alpha_H \to \alpha_G)$-stretch shortest-paths preserving graph $H$ has aspect ratio $(\alpha_H)^{\Omega(\sqrt{n})}$.
\end{restatable}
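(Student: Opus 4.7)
The plan is to construct a DAG $G$ with $\Theta(n)$ nodes whose $\alpha_G$-bad paths impose on any valid $H$ a chain of multiplicative constraints that cascade through $L = \Theta(\sqrt{n})$ levels. The base building block is a three-node cascade gadget: nodes $s,m,t$ with unit-weight edges $(s,m),(m,t)$ and a ``shortcut'' $(s,t)$ of weight strictly larger than $2\alpha_G$, so that $(s,t)$ is an $\alpha_G$-bad path in $G$ and any valid $H$ must satisfy $w_H(s,t) > \alpha_H (w_H(s,m) + w_H(m,t))$. This yields one factor of $\alpha_H$ per gadget.

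I would then nest $L = \Theta(\sqrt{n})$ copies of the base gadget. The nodes form a topologically sorted spine $s_L \to \cdots \to s_1 \to m \to t_1 \to \cdots \to t_L$, where each arrow denotes a length-$\Theta(\sqrt{n})$ subpath of unit-weight edges through fresh intermediate nodes, giving $\Theta(n)$ nodes in total. For each $i$ I would add a cascade edge $e_i = (s_i,t_i)$ whose $G$-weight is bad with respect to the $G$-distance from $s_i$ to $t_i$ along the spine. On each $\Theta(\sqrt{n})$-long descending subpath (between consecutive $s$-nodes, and symmetrically between consecutive $t$-nodes), I would further add auxiliary shortcut edges that render many concatenations of consecutive unit edges $\alpha_G$-bad in $G$. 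These auxiliary bad paths are intended to force, in any valid $H$, the total weight along each descending subpath to grow in lockstep with the main cascade.

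The inductive claim I am aiming for is $w_H(e_i) \geq \alpha_H^{i}\cdot w_{\min}$, where $w_{\min}$ denotes the minimum edge weight of $H$. The base case follows from the warmup gadget. For the inductive step, I would argue that any $s_i \leadsto t_i$ path in $H$ either uses $e_{i-1}$, in which case its weight is at least $w_H(e_{i-1}) \geq \alpha_H^{i-1} w_{\min}$, or it bypasses $e_{i-1}$ through the unit subpaths or through a smaller $e_j$ with $j < i-1$; the auxiliary bad-path constraints on the descending subpaths would force any such bypass to itself have $H$-weight at least $\alpha_H^{i-1} w_{\min}$. Combined with the fact that $e_i$ is $\alpha_G$-bad in $G$, this yields $w_H(e_i) > \alpha_H \cdot d_H(s_i,t_i) \geq \alpha_H^{i} w_{\min}$.

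The main obstacle is exactly the bypass-elimination step of the induction. In more naive constructions without auxiliary bad paths, one can set all unit edges of $H$ to weight $w_{\min}$ and bypass $e_{i-1}$ via a length-$\Theta(\sqrt{n})$ unit spine at total cost only $O(\sqrt{n}) \cdot w_{\min}$, which caps the cascade at polynomial aspect ratio; this is exactly why the $\alpha_H^{\Omega(\sqrt n)}$ bound here is weaker than the $2^{\Omega(n)}$ bounds for general directed graphs in Theorems \ref{thm:lbdir} and \ref{thm:lbdirap}. The role of the auxiliary shortcut edges on each $\Theta(\sqrt{n})$-long descending subpath is to produce enough $\alpha_G$-bad concatenations of unit edges to inductively force the unit edges themselves to accumulate exponentially growing total weight in $H$. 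Verifying that these auxiliary constraints couple correctly to the main cascade, while keeping the $\Theta(n)$ node budget, is the technical heart of the proof; once the cascade closes at level $L = \Theta(\sqrt{n})$, the edge $e_L$ of $H$ has weight at least $\alpha_H^{\Omega(\sqrt n)} w_{\min}$, establishing the claimed aspect ratio lower bound.
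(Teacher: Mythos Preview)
Your proposal identifies the right target (a cascade of $\Theta(\sqrt{n})$ multiplicative constraints), and your warmup gadget and its consequence $w_H(e_i) > \alpha_H \cdot d_H(s_i,t_i)$ are correct. But the bypass-elimination step has a genuine gap that, as far as I can see, cannot be closed with the structure you describe.

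The problem is this. Your auxiliary shortcut edges on the segment from $s_i$ to $s_{i-1}$ are meant to force the unit-edge spine along that segment to be heavy in $H$. Concretely, if an auxiliary shortcut $(a,b)$ has small $G$-weight so that the unit subpath $a \leadsto b$ is $\alpha_G$-bad, the resulting $H$-constraint is $\sum_{e \in a \leadsto b} w_H(e) > \alpha_H \cdot d_H(a,b)$, and $d_H(a,b)$ is bounded above only by the $H$-weight of the shortcut itself, which $H$ is free to set to $w_{\min}$. So each such constraint buys at most one factor of $\alpha_H$. Since all of your segments are structurally identical in $G$, the constraints they impose on $H$ are identical as well; there is no mechanism by which the $i$-th segment can ``know'' it sits at level $i$ and demand total weight $\alpha_H^{i-1} w_{\min}$ rather than $\alpha_H\, w_{\min}$. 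To get level-dependent growth you would need the \emph{good} path in each auxiliary constraint to itself contain a previously-established heavy object (such as $e_{i-1}$), but your auxiliary edges live entirely on one side of the spine and never touch any $e_j$. You acknowledge that this coupling is ``the technical heart of the proof''; as stated, it is not just unverified but appears unobtainable from a one-dimensional spine with purely local auxiliary gadgets.

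The paper's construction is a $\sqrt{n}\times\sqrt{n}$ directed grid (horizontal edges weight $1$, vertical edges in column $j$ of weight $(\alpha_G\sqrt{n})^{2j}$), and the two-dimensionality is precisely what supplies the missing coupling. The induction peels off one row and one column at a time. For the pair $s=(L,0)$, $t=(L-1,L-1)$, the \emph{unique} $\alpha_G$-approximate shortest path in $G$ goes up one step and then right along row $L-1$; by induction, row $L-1$ together with column $L-1$ already has $H$-weight at least $\alpha_H^{L-1}$. The alternative path --- right along row $L$ and then up --- is $\alpha_G$-bad in $G$, hence has $H$-weight exceeding $\alpha_H$ times the former, forcing row $L$ (plus two connector edges, handled symmetrically with the column) to have $H$-weight at least $\alpha_H^{L}$. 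The crucial feature is that the unique good path \emph{contains} the inductively heavy row as a subpath; this is what your spine lacks, since on a spine the shortest $s_i \leadsto t_i$ path can simply avoid $e_{i-1}$ via cheap unit edges or a small-index cascade edge $e_j$.
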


This implies a dichotomy: for the exact version on DAGs the answer is $O(n)$, while for the two-sided approximate version, the answer suddenly jumps to exponential.

\subsection{\label{sec:shpstructure} Discussion: Aspect Ratio and Theory of Shortest Paths Structure}

As mentioned previously, this work can be viewed as a study of the combinatorics of \emph{shortest path systems}: we construct extremal shortest path systems that can only be expressed using large aspect ratio, or (in the case of DAGs) we prove that no such structures exist.
In this sense, our work fits into a recent line of research in theoretical computer science and combinatorics that centers shortest path systems as objects exhibiting remarkable combinatorial structure.
Some other work of this kind includes:

\begin{itemize}
\item Amiri and Wargalla \cite{AW20} proved that, for DAGs $G$, it holds that every triple of nodes lies along a shortest path iff there is a \emph{single} shortest path that covers all nodes in the graph.
Extensions of this theorem to undirected and directed graphs were later established by Akmal and Wein \cite{AW23}.

\item A developing line of work, initiated by Cizma and Linial \cite{CL22, CL23}, studies the class of \emph{geodesic graphs}, which are the graphs in which every complete consistent path complex\footnote{Formally, a path system is \emph{complete} if it contains one path between each node pair, and it is \emph{consistent} if it is closed under taking contiguous subpaths.} can be induced as shortest paths by an edge weight function.
Many graphs satisfying these condition have been discovered, as have obstructions to geodesy.

\item Bodwin \cite{Bodwin19} classifies the combinatorial patterns that can or can't generally appear in the shortest path system induced by a graph with unique shortest paths.
\end{itemize}

Other results along these lines include \cite{Balzotti22, CPFGHKS20}.
All of the above results study the combinatorial structure of shortest path systems that are induced by graphs with \emph{arbitrary} weight functions.
Our results suggest a way to incorporate edge weights into the discussion.
In particular, for general directed and undirected graphs, our results show that the class of shortest paths systems induced by \emph{low aspect-ratio} weight functions is a proper subset of the class of shortest path systems induced by \emph{any} weight function.
This therefore raises the open problem of how the structural results from prior work evolve if we restrict our attention to graphs with ``simple'' edge weights, in the sense of small aspect ratio.
(Some other restrictions on the weight function could be interesting as well.)

%More generally, since (say) polynomial aspect ratio is sometimes \emph{assumed} in theoretical computer science, an\gbnoteinline{todo greg continue from here}
%For example, how do the previous structural results evolve if we restrict our attention to graphs with ``simple'' edge weights, in the sense of small aspect ratio (or something else)?
%We view this paper as a first step in this direction, by investigating the maximum aspect ratio that would need to be considered in such a discussion, and the extremal graphs that achieve it.

\subsection{Open Problems and Future Directions}

Besides the upper bound for DAGs (Theorem \ref{thm:ub}), the findings of this paper point towards the message that \emph{low aspect ratio graphs have additional structure in their shortest paths}.
We conclude our introduction by mentioning three places in which there is still potential to settle our main dichotomy in the other direction.
First: although we have placed focus on directed graphs, undirected graphs, and DAGs, there are many other notable graph classes in which one could seek upper bounds: 

\begin{oq}
Are there other notable graph classes that always admit shortest-paths preserving graphs of aspect ratio $\poly(n)$, besides DAGs?
\end{oq}

Our second problem was mentioned previously, in the context of shortest-paths preservers with stretch for undirected graphs.
Although we have constructed graphs for which there is no $(1+\eps)$-stretch preserver of aspect ratio $\poly(n)$, such a preserver of stretch $O(1)$ is still conceivable:
\begin{oq}
Does every $n$-node undirected graph have a constant-stretch shortest-paths preserving graph of aspect ratio $\poly(n)$?
\end{oq}

Although we focus on aspect ratio in this paper, the same basic questions apply to other complexity measures associated to graph edge weights.
Perhaps most naturally, one can re-ask our questions for graphs of integer edge weights, parametrized by maximum edge weight:
\begin{oq}
Does every $n$-node DAG have a shortest-paths preserving graph $H$ with \textbf{integer} edge weights in the range $[1, \dots, \poly(n)]$?
If not, does this hold if we allow stretch $\alpha$?
\end{oq}

We note that this question is harder (less likely to admit an upper bound) than the one for aspect ratio, since any graph with integer edge weights in the range $[1, \dots, \poly(n)]$ has aspect ratio $\poly(n)$.
Thus, the answer to the question is \emph{no} in the settings of general directed or undirected graphs. But an affirmative answer for DAGs is still possible, and would for example allow us to extend the parallel/distributed results in Corollary 1.8 and 1.9 of \cite{RozhonHMGZ23} to work for arbitrary weighted graphs (assuming $H$ could be computed efficiently).

Finally: this paper focuses on \emph{extremal} questions, investigating the extent to which aspect ratio can be reduced for \emph{all} graphs.
One could also study an instance-optimal version of the problem, in which the goal is to compute a shortest-paths preserving graph of minimum aspect ratio:
\begin{oq}
Is there a polynomial-time algorithm that takes an input graph $G$ on input and computes a shortest-paths preserving graph (possibly with stretch) of minimum or near-minimum aspect ratio?
If not, does this problem admit an approximation algorithm?
\end{oq}

% \begin{oq}
% What is the upper bound on aspect ratio for shortest-paths preserving graphs?
% We show lower bounds of the form $2^{\Omega(n)}$, but is there a matching upper bound?
% Could the right bound be doubly exponential, or higher?
% \end{oq}

\subsection{Paper Organization}

In \cref{sec:dir} we prove our exponential lower bound for directed graphs, which serves as a warm-up for our exponential lower bound for undirected graphs, which appears in \cref{sec:undir}. In \cref{sec:dir,sec:undir} we also include the extensions of our exponential lower bounds to the approximate $\alpha$-stretch version of the problem for directed and undirected graphs respectively. In \cref{sec:dagub} we prove our $O(n)$ upper bound for DAGs. Finally, in \cref{sec:lbdag} we prove our exponential lower bound for the two-sided stretch version of the problem for DAGs. 

Instead of a centralized technical overview, we include a high-level overview of our approach at the beginning of several of the individual sections.

\section{Warm Up: Exponential Lower Bounds for Directed Graphs}\label{sec:dir}

We start by proving our lower bounds for general directed graphs, as these are simpler. We begin with the exact version, and later move to the approximate version. In the next section, we turn to undirected graphs. We first prove Theorem \ref{thm:lbdir}, restated below.

%In this section we will prove exponential lower bounds for the aspect ratio of $r$-BLAHs for directed graphs. We begin with the exact version, and later move to the approximate version.
\lbdir*

\paragraph{High-Level Approach:} It is clear from the example in \cref{fig:intropathlb} that one can easily find graphs where any shortest-paths preserving reweighting has aspect ratio $\Omega(n)$. This example consists of a shortest path $P$ of weight $n$, and a not-shortest path $P'$ (in this case just a single edge) with the same endpoints as $P$. The fact that $P'$ must remain not shortest after reweighting provides a lower bound on the sum of the weights of the edge(s) in $P'$. 

The key question towards getting a lower bound with larger aspect ratio is: can this approach be iterated? That is, we would like to carry out the following procedure:
\begin{enumerate}
\item Generate a collection of edges of at least some weight $w$. 
\item Combine these edges into a collection $\mathcal{P}$ of shortest paths where each path in $\mathcal{P}$ has $\ell$ edges of weight $w$, for some $\ell$.
\item Construct a collection $\mathcal{P}'$ of \emph{not} shortest paths with the same endpoints as the shortest paths in $\mathcal{P}$, such that each of the paths in $\mathcal{P}'$ has $\ell'<\ell$ edges.
\item Now, in order for the paths in $\mathcal{P}'$ to remain not-shortest after reweighting, their edge weights must be on average a multiplicative factor of $\ell/\ell'$ larger than the weight $w$ of the edges in $\mathcal{P}$. Now that we have generated a collection of edges of weight multiplicatively larger than $w$, we can return back to step 1 with a larger value of $w$.
\end{enumerate}

Several challenges arise when trying to construct a graph with these properties: 

First, an edge constructed in some iteration $i$ of this procedure has a very restricted set of properties: it must simultaneously be in a \emph{not-shortest} path with the \emph{same} endpoints as a shortest path from iteration $i$, and be in a \emph{shortest} path in the next iteration $i+1$. This requires a precise interleaving of various shortest and not-shortest paths, and it is unclear whether such a construction should exist. 

Second, it is easy to imagine that if one uses $k$ edges of weight $w$ in the infrastructure of iteration $i$, one might only generate a smaller number, say $k/2$, edges of larger weight for the next iteration. This type of situation would not yield strong bounds because it would only allow $\log n$ iterations of the procedure, which would result in only polynomial aspect ratio. Thus, we need to make sure there is not too much loss  in the number of edges we generate in each iteration (and in fact we achieve no loss).

%Our main contribution is the construction of a
We show how to construct a surprisingly simple graph that satisfies the constraints necessary to carry out the iterated procedure, and furthermore uses only 3 edges of average weight $w$ (plus 3 helper edges) to generate 3 more edges whose average weight is at least $2w$. Because of this, we can perform $\Omega(n)$ iterations of the procedure, which yields an aspect ratio of $2^{\Omega(n)}$.

Now we will provide our construction and analysis in detail.

\subsection{Construction}

\begin{figure}[t]
\begin{center}
\begin{tikzpicture}%[every node/.style={circle,fill=blue!20}]

% Distance between triangles
\def\dist{2.5}

\def\crosscolor{black}

% First triangle (counter-clockwise)
\node [style={circle,fill=blue!20}] (A1) at (0,0) {$v_1^1$};
\node [style={circle,fill=blue!20}] (A2) at (1,1) {$v_1^2$};
\node [style={circle,fill=blue!20}] (A3) at (2,0) {$v_1^3$};
\draw[->] (A1) -- (A2);
\draw[->] (A2) -- (A3);
\draw[->] (A3) -- (A1);
\node [align=center, draw=none, fill=none] at (1, -1.5) {forward cycle,\\edge wts $1/3$};

% Second triangle (clockwise)
\node [style={circle,fill=blue!20}] (B1) at (2+\dist,0) {$v_2^1$};
\node [style={circle,fill=blue!20}] (B2) at (3+\dist,1) {$v_2^2$};
\node [style={circle,fill=blue!20}] (B3) at (4+\dist,0) {$v_2^3$};
\draw[->] (B2) -- (B1);
\draw[->] (B3) -- (B2);
\draw[->] (B1) -- (B3);
\node [align=center, draw=none, fill=none] at (3+\dist, -1.5) {backward cycle,\\edge wts $1/9$};

% Third triangle (counter-clockwise)
\node [style={circle,fill=blue!20}] (C1) at (4+2*\dist,0) {$v_3^1$};
\node [style={circle,fill=blue!20}] (C2) at (5+2*\dist,1) {$v_3^2$};
\node [style={circle,fill=blue!20}] (C3) at (6+2*\dist,0) {$v_3^3$};
\draw[->] (C1) -- (C2);
\draw[->] (C2) -- (C3);
\draw[->] (C3) -- (C1);
\node [align=center, draw=none, fill=none] at (5+2*\dist, -1.5) {forward cycle,\\edge wts $1/27$};

% Connect adjacent triangles
\draw [->, \crosscolor, dotted] (A2) to[bend left=20] (B2);
\draw [->, \crosscolor, dotted] (B2) to[bend left=20] (C2);
\draw [->, \crosscolor, dotted] (A1) to[bend right=20] (B1);
\draw [->, \crosscolor, dotted] (B1) to[bend right=20] (C1);
\draw [->, \crosscolor, dotted] (A3) to[bend right=20] (B3);
\draw [->, \crosscolor, dotted] (B3) to[bend right=20] (C3);

\node [right=1 of C3] {\Huge \bf $\cdots$};

\end{tikzpicture}
\caption{\label{fig:dirconstruction}
Our exponential aspect ratio lower bound for directed graphs (Theorem \ref{thm:lbdir}). The graph consists of $n/3$ cycles of length 3. The cross-cycle edges (gray, dotted) go between same-numbered nodes in adjacent cycles, and all have weight $0$.}
\vspace{9mm}

\begin{tikzpicture}%[every node/.style={circle,fill=blue!20}]
% Distance between triangles
\def\dist{2.5}
\def\crosscolor{black}

% First triangle (counter-clockwise)
\node [style={circle,fill=blue!20,minimum width=1.2cm}] (A1) at (0,0) {$v_i^1$};
\node [style={circle,fill=blue!20,minimum width=1.2cm}] (A2) at (1,1.5) {$v_i^2$};
\node [style={circle,fill=blue!20,minimum width=1.2cm}] (A3) at (2,0) {$v_i^3$};
\draw[->] (A1) -- (A2);
\draw[->, ultra thick, red] (A2) -- (A3);
\draw[->] (A3) -- (A1);
\node [align=center, draw=none, fill=none] at (1, -1.5) {forward cycle,\\edge wts $1/3^i$};

% Second triangle (clockwise)
\node [style={circle,fill=blue!20,minimum width=1.2cm}] (B1) at (2+\dist,0) {$v_{i+1}^1$};
\node [style={circle,fill=blue!20,minimum width=1.2cm}] (B2) at (3+\dist,1.5) {$v_{i+1}^2$};
\node [style={circle,fill=blue!20,minimum width=1.2cm}] (B3) at (4+\dist,0) {$v_{i+1}^3$};
\draw[->, blue, ultra thick] (B2) -- (B1);
\draw[->] (B3) -- (B2);
\draw[->, blue, ultra thick] (B1) -- (B3);
\node [align=center, draw=none, fill=none] at (3+\dist, -1.5) {backward cycle,\\edge wts $1/3^{i+1}$};

% Connect adjacent triangles
\draw [->, blue, dotted, ultra thick] (A2) to[bend left=20] (B2);
\draw [->, \crosscolor, dotted] (A1) to[bend right=20] (B1);
\draw [->, red, dotted, ultra thick] (A3) to[bend right=20] (B3);

\end{tikzpicture}

\caption{\label{fig:dirpaths} Our analysis uses the fact that the shortest path between certain node pairs in adjacent cycles uses no edges from $C_i$ and $2$ edges from $C_{i+1}$ (the blue path -- see Lemma \ref{lem:spdir}), rather than an alternate non-shortest path that uses $1$ edge $C_i$ and  no edges from $C_{i+1}$ (the red path).  For example, in this diagram, the shortest $v_i^2 \leadsto v_{i+1}^3$ path is $(v_i^2, v_{i+1}^2, v_{i+1}^1, v_{i+1}^3)$ (thick, blue) and not $(v_i^2, v_i^3, v_{i+1}^3)$ (thick, red). This ultimately implies that, in any shortest-paths preserving graph, the blue path must remain shorter than the red path and so edge weights in $C_{i}$ must be (on average) at least double of those in $C_{i+1}$ (see Lemma \ref{lem:aspectdir})}.

\end{center}
\end{figure}

We construct the lower-bound graph $G = (V,E,w)$ as follows (see Figure \ref{fig:dirconstruction})

\begin{itemize}
\item We assume that the number of vertices $n$ is divisible by $3$. The graph $G = (V,E,w)$ consists of $n/3$ cycles, each with exactly $3$ vertices. Label the cycles $C_1, ... C_{n/3}$, where cycle $C_i$ consists of vertices $v^1_i, v^2_i, v^3_i$.
\item Each cycle is either a \emph{forward cycle} or a \emph{backward cycle}. If $C_i$ is a forward cycle, then the graph contains directed edges $(v^1_i, v^2_i)$, $(v^2_i, v^3_i)$ and $(v^3_i, v^1_i)$. If $C_i$ is a backward cycle then the graph contains directed edges $(v^1_i, v^3_i)$, $(v^3_i, v^2_i)$ and $(v^2_i, v^1_i)$. The cycles alternate forward and backward. That is, $C_i$ is forward for odd $i$ and backward for even $i$.
\item There are also edges between cycles, which we call \emph{cross-cycle edges}. In particular, for every $i$, there is an edge $(v^1_i, v^1_{i+1})$, as well as edges $(v^2_i, v^2_{i+1})$ and $(v^3_i, v^3_{i+1})$. Note that edges only go \textit{from} $C_i$ \textit{to} $C_{i+1}$; there are no edges from $C_i$ to any other $C_j$, and in particular there are no edges going from $C_i$ to $C_{i-1}$. (That is, the graph is almost a DAG, except that the $C_i$ themselves are cycles.)
\item For any edge $e \in C_i$ we set $w(e) = 1/3^i$. We set the weight of all cross-cycle edges to $0$\footnote{If we want to keep edge-weights positive, then the same proof goes through if we set $w(e) = \delta$ for all cross-cycle edges; for the exact lower bound (Theorem \ref{thm:lbdir}) the value of $\delta$ does not matter, but for the approximate lower bound (Theorem \ref{thm:lbdirap}) we would need to set $\delta$ to be very tiny.}
%That is, edges in $C_1$ have weight $1/3$, edges in $C_2$ have weight $1/9$, edges in $C_3$ have weight $1/27$, and so on. 
%Meanwhile, all cross-cycle edges can have any weight, as long as each cross-cycle edge has the same weight as the other cross-cycle edges between the same two cycles. For simplicity, let all cross-cycle edges have weight 0. 

%\item Randomly perturb all edge weights by a random value in the range $(0,1/3^n)$ to ensure that shortest paths are unique.
\end{itemize}

\subsection{Analysis}
\label{sec:dir-analysis}

The analysis is broken into two parts. First, we will identify a particular set $\mathcal{P}$ of paths and in \cref{lem:spdir} we will show that they are unique shortest paths in $G$ under weight function $w$. Then, in \cref{lem:aspectdir} we will show that any new weight function $w'$ on $G$ for which all paths in $\mathcal{P}$ remain shortest must have aspect ratio $2^{\Omega(n)}$.

\paragraph{Construction of $\mathcal{P}$.} See the blue path in Figure \ref{fig:dirpaths} for an example of a path in $\mathcal{P}$. All paths in $\mathcal{P}$ go from a vertex in $C_i$ to a vertex in $C_{i+1}$. Consider some cycle $C_i$. Let $v^k_i, v^{k'}_i$ be any two consecutive vertices in $C_i$: so if $C_i$ goes forward then we have $(k,k') \in \{ (1,2), (2,3), (3,1)\}$, while if $C_i$ goes backward then we have  $(k,k') \in \{ (1,3), (3,2), (2,1)\}$. 

For any such consecutive pair $v^k_i, v^{k'}_i$, we add the following path from $v^k_i$ to $v^{k'}_{i+1}$ to $\mathcal{P}$: the path takes a cross-cycle edge from $v^k_i$ to $v^{k}_{i+1}$, and then it follows $C_{i+1}$ from $v^{k}_{i+1}$ to $v^{k'}_{i+1}$. For example, if $C_i$ is a forward cycle and the consecutive pair is $v^2_i, v^3_i$, then we add to $\mathcal{P}$ the following 3-edge path from $v^2_i$ to $v^3_{i+1}$: $(v^2_{i}, v^2_{i+1}) \circ (v^2_{i+1}, v^1_{i+1}) \circ (v^1_{i+1}, v^3_{i+1})$. 

\begin{lemma}\label{lem:spdir}
Each path in $\mathcal{P}$ is the shortest path between its endpoints in $G$. 
\end{lemma}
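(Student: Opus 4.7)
The plan is to reduce the claim to a small case analysis parameterized by which cross-cycle edge a candidate path uses. First I would argue that any $s \leadsto t$ walk with $s = v_i^k$ and $t = v_{i+1}^{k'}$ is confined to the subgraph induced by $C_i \cup C_{i+1}$. Indeed, cross-cycle edges only go from $C_j$ to $C_{j+1}$ (never backwards), so once a walk reaches $C_{i+2}$ or beyond there is no way to return to $C_{i+1}$ to reach $t$. By the same token, such a walk uses exactly one cross-cycle edge, and this edge must go from some $v_i^j$ to $v_{i+1}^j$ with $j \in \{1,2,3\}$.

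Once this structure is fixed, any candidate shortest path decomposes as (i) a sub-walk inside $C_i$ from $v_i^k$ to $v_i^j$, (ii) the zero-weight cross edge $(v_i^j, v_{i+1}^j)$, (iii) a sub-walk inside $C_{i+1}$ from $v_{i+1}^j$ to $v_{i+1}^{k'}$. Since edges within a cycle have positive weight, the cheapest sub-walk between two vertices of a directed $3$-cycle uses at most $2$ edges, with the exact count determined by how the target vertex sits relative to the source along the cycle orientation. The key structural point, which I would emphasize, is that $C_i$ and $C_{i+1}$ have opposite orientations by construction. So even though $(k,k')$ is a consecutive pair in $C_i$, in $C_{i+1}$ the vertex $v_{i+1}^{k'}$ is reached from $v_{i+1}^{k}$ only after traversing the other two edges of $C_{i+1}$.

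With these observations, I would enumerate the three choices of $j$:
\begin{itemize}
\item $j = k$ (the $\mathcal{P}$ path): $0$ edges of $C_i$ and $2$ edges of $C_{i+1}$, total weight $2/3^{i+1}$.
\item $j = k'$: $1$ edge of $C_i$ (using the consecutive edge $(v_i^k, v_i^{k'})$) and $0$ edges of $C_{i+1}$, total weight $1/3^i = 3/3^{i+1}$.
\item $j$ equal to the third vertex of $C_i$: $2$ edges of $C_i$ and $1$ edge of $C_{i+1}$, total weight $2/3^i + 1/3^{i+1} = 7/3^{i+1}$.
\end{itemize}
Comparing, the first option is strictly smaller, so the $\mathcal{P}$ path is the unique shortest $s \leadsto t$ path in $G$, finishing the lemma.

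The main obstacle is not really a calculation but the structural setup: one must be careful about the interaction between the one-way cross-cycle edges (which confine and orient the path) and the alternating orientation of the cycles (which is precisely what forces the $\mathcal{P}$ option to pay for $2$ edges in $C_{i+1}$ rather than $1$). Once these two ingredients are in place, the geometric-series weighting $1/3^i$ makes the comparison immediate.
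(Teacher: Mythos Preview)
Your proof is correct and follows essentially the same structure as the paper's: both confine any candidate path to $C_i \cup C_{i+1}$ using the one-way nature of the cross-cycle edges, and then compare alternatives. The only difference is stylistic: you explicitly enumerate all three crossing indices $j$ and compute each cost, whereas the paper dispatches the two non-trivial cases ($j=k'$ and $j$ the third vertex) in one stroke by noting that any path using even a single $C_i$-edge already has weight $\geq 1/3^i > 2/3^{i+1} = w(P)$, and then observes that the unique $C_i$-avoiding path is $P$ itself.
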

\begin{proof} Fix a path $P\in\mathcal{P}$, where $P$ is from $v^k_i$ to $v^{k'}_{i+1}$ for some consecutive pair $v^k_i, v^{k'}_i$. It is easy to see that $w(P) = 2/3^{i+1}$, which corresponds to two edges on $C_{i+1}$ (recall that the cross-cycle edge has weight $0$). Let $P'$ be an alternate path with the same endpoints as $P$. We will argue that $w(P')>w(P)$. Each edge in $C_i$ is of weight at least $1/3^i$, so if $P'$ uses an edge in $C_i$ then indeed $w(P') \geq 1/3^i > w(P)$. The only edge from $v^k_i$ that is not in $C_i$ is the first edge of $P$: $(v^k_i, v^{k}_{i+1})$, so $P'$ must begin with this edge. From $v^{k}_{i+1}$, if $P'$ uses another cross-cycle edge, it can never return to $C_{i+1}$ since all cross-cycle edges go from some $C_j$ to $C_{j+1}$. Therefore, the only edges that $P'$ can take from $v^{k}_{i+1}$ are edges in $C_{i+1}$. Because $C_{i+1}$ is simply a directed cycle, there is only one path to $v^{k'}_{i+1}$, which is precisely the path that $P$ takes.
\end{proof}

\begin{lemma}\label{lem:aspectdir}
Any new weight function $w_H$ on $G$ for which all paths in $\mathcal{P}$ remain shortest has aspect ratio $2^{\Omega(n)}$.
\end{lemma}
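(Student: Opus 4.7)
The plan is to show that the constraints imposed by $\mathcal{P}$ force the weight sums $S_i := \sum_{e \in C_i} w_H(e)$ to satisfy the recurrence $S_i \ge 2\,S_{i+1}$ for every $i$. Iterating this recurrence yields $S_{n/3} \le S_1/2^{n/3-1}$, and then averaging produces an edge of $C_1$ of weight at least $S_1/3$ and an edge of $C_{n/3}$ of weight at most $S_{n/3}/3$. Since $w_H$ is positive we have $S_1 > 0$, and the aspect ratio is at least $2^{n/3-1} = 2^{\Omega(n)}$.

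To derive the recurrence, I would fix $i$ and consider the three paths of $\mathcal{P}$ leaving $C_i$, one for each consecutive pair $(v_i^k, v_i^{k'})$ in $C_i$. The $\mathcal{P}$-path uses the cross-cycle edge $(v_i^k, v_{i+1}^k)$ followed by the two-edge arc inside $C_{i+1}$ from $v_{i+1}^k$ to $v_{i+1}^{k'}$; the two-edge arc is used precisely because $C_{i+1}$ has the opposite orientation from $C_i$. There is a natural alternative two-edge path from $v_i^k$ to $v_{i+1}^{k'}$, namely the single $C_i$-edge $(v_i^k, v_i^{k'})$ followed by the cross-cycle edge $(v_i^{k'}, v_{i+1}^{k'})$. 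Because the $\mathcal{P}$-path must remain shortest under $w_H$, its weight is at most the weight of the alternative.

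Summing these three inequalities, on each side every cross-cycle edge $c_i^j$ appears exactly once (on the left when $k = j$, on the right when $k' = j$), so these terms cancel. On the right, each edge of $C_i$ appears exactly once, contributing $S_i$. On the left, each edge of $C_{i+1}$ appears exactly twice: each of the three two-edge $C_{i+1}$-arcs omits exactly one of the three $C_{i+1}$-edges, so summing over the three consecutive pairs counts each edge $3 - 1 = 2$ times. The cancellation gives $2\,S_{i+1} \le S_i$, which is exactly the recurrence we want.

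The main obstacle I anticipate is the combinatorial bookkeeping in the summation step: one needs to verify cleanly that the cross-cycle edges cancel and that each edge of $C_{i+1}$ is counted exactly twice on the left. This is a short check but it is what makes the construction work, since it shows that the $3$ edges of $C_i$ must pay for \emph{twice} the weight of the $3$ edges of $C_{i+1}$, and thereby forces the factor-of-$2$ drop per layer (rather than the weaker constant that would only give a polynomial lower bound). A minor subtlety is that we are implicitly assuming that $w_H$ assigns positive weights, consistent with the standing assumption of the paper, so that $S_1 > 0$ and the aspect ratio is well-defined.
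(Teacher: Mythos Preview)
Your proposal is correct and follows essentially the same argument as the paper: compare each $\mathcal{P}$-path to the natural two-edge alternative, sum over the three consecutive pairs so that the cross-cycle edges cancel and the $C_{i+1}$-edges are counted twice, and conclude $S_i \ge 2S_{i+1}$. Your bookkeeping on the cancellation and the double-counting of $C_{i+1}$-edges is exactly right, and the final averaging step to extract the aspect-ratio bound is also fine (indeed, the minimum edge in $H$ could be a cross-cycle edge, which only strengthens the inequality).
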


\begin{proof}
    Let $H=(V,E,w_H)$ be a shortest-paths-preserving graph of $G$. We will show that the weight of each cycle in $H$ is double that of its successor; that is, for all $i\in \{1,\dots,n/3-1\}$, we will show that $w_H(C_i)> 2\cdot w_H(C_{i+1})$. This implies that the aspect ratio of $w_H$ is $2^{\Omega(n)}$.
    
    For any consecutive pair $v^k_{i}$, $v^{k'}_{i}$, we know that in $H$ the 3-edge path in $\mathcal{P}$ from $v^k_i$ to $v^{k'}_{i+1}$ is shorter than the 2-edge path $(v^k_{i}, v^{k'}_{i}) \circ (v^{k'}_{i}, v^{k'}_{i+1})$. (These correspond to the blue and red paths respectively in Figure \ref{fig:dirpaths}.) Taking the sum of this inequality for all three consecutive pairs in $C_i$, we get $w_H(C_i) + X > 2w_H(C_{i+i}) + X$, where $X = \sum_{k=1}^3 w_H(v^k_i, v^k_{i+1})$ is the sum (in $H$) of the weights of the cross-cycle edges from $C_i$ to $C_{i+1}$. The $X$ cancels, so we get $w_H(C_i) > 2w_H(C_{i+i})$, as desired. 
\end{proof}

\subsection{Approximate Version}

We now show that in directed graphs, some graphs require exponential aspect ratio even if we allow arbitrary stretch. We prove Theorem \ref{thm:lbdirap}, restated below:

\lbdirap*

\paragraph{Construction:} the construction of $G$ is exactly the same as for the exact version above, except all edges in $C_i$ have weight $1/(3\cdot \alpha)^i$ instead of $1/3^i$. 
%and the random perturbation is in the range $(0,1/(2\cdot\alpha_G)^n)$ instead of $(0,1/3^n)$.

\paragraph{Analysis:} we will use the same set $\mathcal{P}$ of paths as for the exact version. We will show in \cref{lem:spdirap} that each path in $\mathcal{P}$ is not only the shortest path between its endpoints in $G$, but is also the \emph{only} $\alpha$-approximate shortest path between its endpoints in $G$. Since $H$ is an $\alpha$-stretch shortest-path preserving graph of $G$, this implies that each path in $\mathcal{P}$ is the \emph{unique} shortest path between its endpoints in $H$. We have already shown in \cref{lem:aspectdir} that any new weight-function $w_H$ on $G$ for which all paths in $\mathcal{P}$ remain shortest paths has aspect ratio $2^{\Omega(n)}$. Thus, it only remains to prove \cref{lem:spdirap}:

\begin{lemma}\label{lem:spdirap}
    Each path in $\mathcal{P}$ is the only $\alpha$-approximate shortest path between its endpoints in $G$.
\end{lemma}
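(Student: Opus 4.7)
The plan is to adapt the structural reasoning from the proof of Lemma \ref{lem:spdir} to the new weight function and then track the precise ratio between $w(P)$ and the weight of any competing path. Fix a path $P \in \mathcal{P}$ from $v^k_i$ to $v^{k'}_{i+1}$, so $P$ uses the cross-cycle edge $(v^k_i, v^k_{i+1})$ of weight $0$ followed by the two edges of $C_{i+1}$ that form the unique directed path in $C_{i+1}$ from $v^k_{i+1}$ to $v^{k'}_{i+1}$. Under the new weights this gives $w(P) = 2/(3\alpha)^{i+1}$.

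Next, I would argue structurally that any simple path $P' \neq P$ from $v^k_i$ to $v^{k'}_{i+1}$ must contain at least one edge of $C_i$. This is essentially the same argument as in Lemma \ref{lem:spdir}: the only edge leaving $v^k_i$ outside of $C_i$ is the cross-cycle edge $(v^k_i, v^k_{i+1})$, and once we reach $v^k_{i+1}$, any subsequent cross-cycle edge drops us into $C_{i+2}$ with no way back (all cross-cycle edges go from $C_j$ to $C_{j+1}$). So if $P'$ avoids $C_i$, it must immediately cross to $C_{i+1}$ and then stay inside $C_{i+1}$, which forces $P' = P$ because $C_{i+1}$ is a directed $3$-cycle with a unique directed path from $v^k_{i+1}$ to $v^{k'}_{i+1}$.

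Finally, I would plug in the new edge weights. Any edge of $C_i$ has weight $1/(3\alpha)^i$, so $w(P') \geq 1/(3\alpha)^i$. Comparing with the threshold $\alpha \cdot w(P) = 2\alpha/(3\alpha)^{i+1} = (2/3)/(3\alpha)^i$, we get
\[
w(P') \;\geq\; \frac{1}{(3\alpha)^i} \;>\; \frac{2/3}{(3\alpha)^i} \;=\; \alpha \cdot w(P),
\]
so $P'$ is not an $\alpha$-approximate shortest path, establishing that $P$ is the unique such path. (If the cross-cycle edges are given the tiny weight $\delta > 0$ from the footnote instead of $0$, the same calculation goes through by picking $\delta$ small enough that the above strict inequality is preserved.)

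The only subtle point is confirming that the "no return" structure still handles non-simple walks; but since $C_{i+1}$ has strictly positive total weight $3/(3\alpha)^{i+1}$, any walk that loops around $C_{i+1}$ only increases its weight, so it suffices to reason about simple paths. I do not anticipate a main obstacle here — the argument is a direct reweighted analog of Lemma \ref{lem:spdir}, and the factor of $\alpha$ built into the definition of the edge weights in $C_i$ is exactly what makes the gap close with room to spare.
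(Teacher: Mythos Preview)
Your proof is correct and follows essentially the same approach as the paper's: compute $w(P)=2/(3\alpha)^{i+1}$, argue (exactly as in Lemma~\ref{lem:spdir}) that any alternate path must use an edge of $C_i$, and then verify that a single $C_i$-edge already exceeds $\alpha\cdot w(P)$. The only cosmetic difference is the order of presentation---you first isolate the structural claim that $P'$ must touch $C_i$ and then do the arithmetic, whereas the paper interleaves them---plus your added remarks about non-simple walks and the $\delta$ footnote, which are fine but not needed.
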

%%% TODO: Aaron continues from here.
\begin{proof}
   Fix a path $P\in\mathcal{P}$. We know that $P$ is a path from $v^k_i$ to $v^{k'}_{i+1}$ for some consecutive pair $v^k_i, v^{k'}_i$. It is easy to see that $w(P) = \frac{2}{(3\cdot \alpha)^{i+1}}$. Let $P'$ be an alternate path with the same endpoints as $P$. We will argue that $w(P')>\alpha\cdot w(P)$. 
   
Each edge in $C_i$ is of weight at least: 
$$\frac{1}{(3\alpha)^i}>\alpha\cdot \frac{2}{(3\cdot\alpha)^{i+1}} =\alpha\cdot w(P),$$
so if $P'$ uses an edge in $C_i$ then indeed $w(P')>\alpha\cdot w(P)$. The only edge from $v^k_i$ that is not in $C_i$ is the first edge of $P$: $(v^k_i, v^{k}_{i+1})$, so $P'$ must begin with this edge. From $v^{k}_{i+1}$, if $P'$ uses another cross-cycle edge, it can never return to $C_{i+1}$ since all cross-cycle edges go from some $C_j$ to $C_{j+1}$. Therefore, the only edges that $P'$ can take from $v^{k}_{i+1}$ are edges in $C_{i+1}$. Because $C_{i+1}$ is simply a directed cycle, there is only one path to $v^{k'}_{i+1}$, which is precisely the path that $P$ takes.
\end{proof}

\section{Exponential Lower Bounds for Undirected Graphs}\label{sec:undir}
In this section, we prove our lower bounds for undirected graphs, which are somewhat more complicated. We start with the exact shortest-path preservers (theorem restated below), and turn to approximate ones in the next subsection. 
\lbundir*

%\aaron{This high-level intuition is totally new. Let me know what you think.}
\paragraph{High-Level Approach:} Our construction has the same basic structure as the lower bound for directed graphs in the previous section. The graph $G$ will again consists of constant-length cycles $C_1$, .., $C_k$, and we will again construct the graph in such a way as to ensure that the shortest-path-preserver $H$ must satisfy $w_H(C_i) > 2 \cdot w_H(C_{i+1})$; since the number of cycles is $\Omega(n)$, this implies that $w_H$ has aspect ratio $2^{\Omega(n)}$. 

At a high level, the argument in directed graphs relied on the fact that every path $P \in \mathcal{P}$ from $C_i$ to $C_{i+1}$ consists of $2$ edges in $C_{i+1}$ (plus a cycle-crossing edge), yet is shorter than an alternative path $P'$ between the same endpoints which consists of $1$ edge in $C_i$ (plus a cycle-crossing edge) -- see Figure \ref{fig:dirpaths}. This implies that edges in $C_i$ must have at least double the weights of those in $C_{i+1}$, as desired. 

In order to construct an undirected graph which provides a similar guarantee, we need to overcome three new issues with this approach that are specific to undirected graphs. 

The first issue is that in directed graphs we ensured that $P$ has fewer edges than $P'$ by alternating the direction of the cycles. But in undirected graphs, there can be no ``forward" or ``backward" cycles. We overcome this issue by introducing more structure to the cross-cycle edges, which ensures that every $P \in \mathcal{P}$ still uses two edges from $C_{i+1}$, and that there is still an alternative path $P'$ using a single edge from $C_i$. In particular, the cross-cycle edges no longer simply go from each vertex in $C_i$ to its copy in $C_{i+1}$.

The second issue is that we need our construction to control the direction that the shortest path $P$ will follow along the cycle. This is easily accomplished by increasing the size of each cycle to 5, so that following the $2$ edges of $C_{i+1}$ on $P$ is shorter than following the $3$ edges of $C_{i+1}$ in the other direction.

Finally, the third issue is that in undirected graphs, there can also be paths from $C_i$ to $C_{i+1}$ that go through multiple levels of cycles (e.g. to $C_{i+2}$) and then return back to $C_{i+1}$. (In the directed construction, this could not happen because edges only pointed from lower to higher numbered cycles.) Because of this, we no longer have the freedom to set the cross-cycle edges to have weight $0$. In fact, if they were weight 0 then there would be a path of weight 0 from every vertex to every other vertex in the graph, due to the newly defined cross-cycle edges from the first issue. Instead, we need to set the weights of the cross-cycle edges to be higher; for the exact lower bound we can simply set the cross-cycle weights to be very large, but for the approximate lower bound we need to balance them with the weights of the cycle edges. See the high-level description within the approximate version section (\cref{sec:undirapprox}) for more details on this issue.

\subsection{Construction}
We construct our lower-bound graph $G = (V,E,w)$ as follows (see Figure \ref{fig:undirconstruction}):

\begin{figure}[ht]
\begin{center}
\begin{tikzpicture}

% Distance between pentagons
\def\dist{4.5}
\def\crosscolor{black}

% First pentagon (clockwise)
\node [style={circle,fill=blue!20}] (A1) at (72*1:1) {$v_1^1$};
\node [style={circle,fill=blue!20}] (A2) at (72*2:1) {$v_1^2$};
\node [style={circle,fill=blue!20}] (A3) at (72*3:1) {$v_1^3$};
\node [style={circle,fill=blue!20}] (A4) at (72*4:1) {$v_1^4$};
\node [style={circle,fill=blue!20}] (A5) at (72*5:1) {$v_1^5$};
\draw (A2) -- (A1);
\draw (A3) -- (A2);
\draw (A4) -- (A3);
\draw (A5) -- (A4);
\draw (A1) -- (A5);

\node at (0, -2) {edge wts $1/3$};

% Second pentagon (counter-clockwise)
\node [style={circle,fill=blue!20}] (B1) [right=5 of A1] {$v_2^1$};
\node [style={circle,fill=blue!20}] (B2) [right=5 of A2] {$v_2^2$};
\node [style={circle,fill=blue!20}] (B3) [right=5 of A3] {$v_2^3$};
\node [style={circle,fill=blue!20}] (B4) [right=5 of A4] {$v_2^4$};
\node [style={circle,fill=blue!20}] (B5) [right=5 of A5] {$v_2^5$};
\draw (B1) -- (B2);
\draw (B2) -- (B3);
\draw (B3) -- (B4);
\draw (B4) -- (B5);
\draw (B5) -- (B1);
\node at (6, -2) {edge wts $1/9$};

% Third pentagon (clockwise)
\node [style={circle,fill=blue!20}] (C1) [right=5 of B1] {$v_3^1$};
\node [style={circle,fill=blue!20}] (C2) [right=5 of B2] {$v_3^2$};
\node [style={circle,fill=blue!20}] (C3) [right=5 of B3] {$v_3^3$};
\node [style={circle,fill=blue!20}] (C4) [right=5 of B4] {$v_3^4$};
\node [style={circle,fill=blue!20}] (C5) [right=5 of B5] {$v_3^5$};
\draw (C2) -- (C1);
\draw (C3) -- (C2);
\draw (C4) -- (C3);
\draw (C5) -- (C4);
\draw (C1) -- (C5);
\node at (12, -2) {edge wts $1/27$};

% edges between pentagons
\draw[\crosscolor,dotted] (A1) to[bend right=20] (B1);
\draw[\crosscolor,dotted] (A2) to[bend right=20] (B3);
\draw[\crosscolor,dotted] (A3) to[bend right=20] (B5);
\draw[\crosscolor,dotted] (A4) to[bend right=20] (B2);
\draw[\crosscolor,dotted] (A5) to[bend right=20] (B4);

\draw[\crosscolor,dotted] (B1) to[bend right=20] (C1);
\draw[\crosscolor,dotted] (B2) to[bend right=20] (C3);
\draw[\crosscolor,dotted] (B3) to[bend right=20] (C5);
\draw[\crosscolor,dotted] (B4) to[bend right=20] (C2);
\draw[\crosscolor,dotted] (B5) to[bend right=20] (C4);

\node [right=3 of C3] {\Huge \bf $\cdots$};

\end{tikzpicture}
\caption{\label{fig:undirconstruction}
 Our exponential aspect ratio lower bound in undirected graphs (Theorem \ref{thm:lbundir}). The graph consists of $n/5$ cycles of length. The cross-cycle edges (gray, dotted) go between \textbf{differently-numbered} nodes in adjacent cycles; the cross-cycle edges have weight $1$. (The cross-cycle edges have weight $1/3^{i-1}$ for the $\alpha$-stretch lower bound.)}
%\end{center}
% \end{figure}

% %%%%%%%%%%%%%%%%
% %%%%%%%%%%%%%%%%

% \begin{figure}[ht]
%\begin{center}
\vspace{9mm}
\begin{tikzpicture}

% Distance between pentagons
\def\dist{4.5}
\def\crosscolor{black}

% First pentagon (clockwise)
\node [style={circle,fill=blue!20,minimum width=1.2cm}] (A1) at (72*1:1.5) {$v_i^1$};
\node [style={circle,fill=blue!20,minimum width=1.2cm}] (A2) at (72*2:1.5) {$v_i^2$};
\node [style={circle,fill=blue!20,minimum width=1.2cm}] (A3) at (72*3:1.5) {$v_i^3$};
\node [style={circle,fill=blue!20,minimum width=1.2cm}] (A4) at (72*4:1.5) {$v_i^4$};
\node [style={circle,fill=blue!20,minimum width=1.2cm}] (A5) at (72*5:1.5) {$v_1^5$};
\draw (A2) -- (A1);
\draw (A3) -- (A2);
\draw (A4) -- (A3);
\draw [red, line width=0.5em] (A5) -- (A4);
\draw (A1) -- (A5);

\node at (0, -2.5) {edge wts $1/3^i$};

% Second pentagon (counter-clockwise)
\node [style={circle,fill=blue!20,minimum width=1.2cm}] (B1) [right=5 of A1] {$v_{i+1}^1$};
\node [style={circle,fill=blue!20,minimum width=1.2cm}] (B2) [right=5 of A2] {$v_{i+1}^2$};
\node [style={circle,fill=blue!20,minimum width=1.2cm}] (B3) [right=5 of A3] {$v_{i+1}^3$};
\node [style={circle,fill=blue!20,minimum width=1.2cm}] (B4) [right=5 of A4] {$v_{i+1}^4$};
\node [style={circle,fill=blue!20,minimum width=1.2cm}] (B5) [right=5 of A5] {$v_{i+1}^5$};
\draw (B1) -- (B2);
\draw [line width=0.5em, blue] (B2) -- (B3);
\draw [line width=0.5em, blue] (B3) -- (B4);
\draw (B4) -- (B5);
\draw (B5) -- (B1);
\node at (6, -2.5) {edge wts $1/3^{i+1}$};

% edges between pentagons
\draw[\crosscolor,dotted] (A1) to[bend right=20] (B1);
\draw[\crosscolor,dotted] (A2) to[bend right=20] (B3);
\draw[\crosscolor,dotted] (A3) to[bend right=20] (B5);
\draw[blue, line width=0.25em, dotted] (A4) to[bend right=20] (B2);
\draw[red, line width=0.25em, dotted] (A5) to[bend right=20] (B4);

\end{tikzpicture}
\caption{\label{fig:undirpaths} Just like in our directed lower bound, our analysis uses the fact that the blue shortest path uses $0$ edges from $C_i$ and $2$ edges from $C_{i+1}$, while the alternate non-shortest red path uses $1$ edge from $C_i$ and $0$ edges from $C_{i+1}$. For example, in this diagram, the shortest $v_i^4 \leadsto v_{i+1}^4$ path is $(v_i^4, v_{i+1}^2, v_{i+1}^3, v_{i+1}^4)$ (thick, blue) and not $(v_i^4, v_i^5, v_{i+1}^4)$ (thick, red).} %This again ultimately implies that, in any shortest-paths preserving graph, the blue path must remain shorter than the red path and so edge weights in $C_{i}$ must be (on average) at least double of those in $C_{i+1}$ (see Lemma \ref{lem:aspectundir})}.
\end{center}
\end{figure}

\begin{itemize}
\item We assume that the number of vertices $n$ is divisible by $5$. The graph consists of $n/5$ cycles, each with exactly $5$ vertices. Label the cycles $C_1, ... C_{n/5}$, where cycle $C_i$ consists of vertices $v^1_i, v^2_i\dots, v^5_i$, and edges $(v^1_i, v^2_i)$, $(v^2_i, v^3_i)$, $(v^3_i, v^4_i)$, $(v^4_i, v^5_i)$, and $(v^5_i, v^1_i)$.
\item For all $i$, the cross-cycle edges are as follows: $(v^1_i, v^1_{i+1})$, $(v^2_i, v^3_{i+1})$, $(v^3_i, v^5_{i+1})$, $(v^4_i, v^2_{i+1})$, and $(v^5_i, v^4_{i+1})$. The pattern of these edges is that as the superscript of the first vertex increases by 1, the superscript of the second vertex increases by $2 \pmod 5$. %We will use the notation $\text{match}(v^k_i)$ to denote the vertex in $C_{i+1}$ to which $v^k_i$ has a cross-cycle edge.
\item All edges in $C_i$ have weight $1/3^i$. Meanwhile, all cross-cycle edges have weight 1.
%\item Randomly perturb all edge weights by a random value in the range $(0,1/3^n)$ to ensure that shortest paths are unique.
\end{itemize}

\subsection{Analysis}
\label{sec:undir-analysis}

Similar to the directed case, the analysis is broken into two parts. First, we will identify a particular set $\mathcal{P}$ of paths and in \cref{lem:spundir} we will show that they are unique shortest paths in $G$. Then, in \cref{lem:aspectundir} we will show that any reassignment of edge weights of $G$ for which all paths in $\mathcal{P}$ remain shortest has aspect ratio $2^{\Omega(n)}$.

\paragraph{Construction of $\mathcal{P}$.} See the blue path in Figure \ref{fig:undirpaths} for an example of a path in $\mathcal{P}$. All paths in $\mathcal{P}$ go from a vertex in $C_i$ to a vertex in $C_{i+1}$. For every vertex $v^k_i$ add to $\mathcal{P}$ the following 3-edge path from $v^k_i$. Take the cross-cycle edge from $v^k_i$ to the cycle $C_{i+1}$ and let $v^j_{i+1}$ be the other endpoint of the edge. Then take two cycle edges in $C_{i+1}$: first the edge to $v^{j+1\pmod 5}_{i+1}$ and then the edge to $v^{j+2\pmod 5}_{i+1}$.

\begin{lemma}\label{lem:spundir}
Each path in $\mathcal{P}$ is the unique shortest path between its endpoints. 
\end{lemma}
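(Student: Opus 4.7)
The plan is to show that any alternative $v^k_i$--$v^{j+2}_{i+1}$ path $P'$ in $G$ strictly exceeds $w(P) = 1 + 2/3^{i+1}$ in weight, where this value of $w(P)$ comes from one cross-cycle edge (weight $1$) plus two cycle edges of $C_{i+1}$. The cleanest first step is to observe that every cross-cycle edge has weight $1$, so any $P'$ using at least two cross-cycle edges already has $w(P') \geq 2 > w(P)$. This lets me restrict attention to paths with exactly one cross-cycle edge. (Parity is also consistent with this: any path from $C_i$ to $C_{i+1}$ uses an odd number of cross-cycle edges, since each such edge changes the cycle index by $\pm 1$.)

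Such a $P'$ must have its single cross-cycle edge land in $C_{i+1}$ rather than $C_{i-1}$, since otherwise the path cannot return to $C_{i+1}$ without another cross-cycle edge. Thus $P'$ decomposes as a walk of $x$ cycle edges within $C_i$ from $v^k_i$ to some $v^a_i$, then a single cross-cycle edge to the partner vertex $v^b_{i+1}$ determined by the shifted ``$+2 \pmod{5}$'' pattern, then a walk of $y$ cycle edges within $C_{i+1}$ ending at $v^{j+2}_{i+1}$. This gives
\[
w(P') = \frac{x}{3^i} + 1 + \frac{y}{3^{i+1}} = 1 + \frac{3x + y}{3^{i+1}},
\]
so the task reduces to showing $3x + y \geq 2$, with equality only for $P$.

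Short casework on $x$ finishes the argument. If $x \geq 1$ then $3x + y \geq 3 > 2$, so $P'$ is strictly heavier. If $x = 0$, then $v^a_i = v^k_i$, and the cross-cycle pattern forces $v^b_{i+1} = v^j_{i+1}$, which is exactly where the in-cycle walk of $P$ starts. In the $5$-cycle $C_{i+1}$, the distance from $v^j_{i+1}$ to $v^{j+2}_{i+1}$ is $2$, so $y \geq 2$, matching $w(P)$. The main point to verify carefully is the uniqueness of this $2$-edge route: traversing $C_{i+1}$ in the other direction takes $3$ edges to reach $v^{j+2}_{i+1}$, so the only $y = 2$ option is $v^j_{i+1} \to v^{j+1}_{i+1} \to v^{j+2}_{i+1}$, namely $P$ itself. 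Combining the cases shows $P$ is the unique shortest path between its endpoints, as desired.
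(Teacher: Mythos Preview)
Your proof is correct and follows essentially the same approach as the paper's: both rule out paths with multiple cross-cycle edges, then rule out paths using any edge of $C_i$, and finally observe that the remaining alternative within $C_{i+1}$ is the $3$-edge route around the $5$-cycle. Your presentation is slightly more algebraic (packaging the comparison as $3x + y \geq 2$), but the underlying argument is the same.
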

\begin{proof} Fix a path $P\in\mathcal{P}$. $P$ is from $v^k_i$ to $v^{j+2\pmod 5}_{i+1}$. We calculate $w(P)$ as follows: the first term is for the weight-1 cross-cycle edge, while the second term is the sum of the two edge weights in $C_{i+1}$.  \[w(P) = 1+2/3^{i+1}.\]
Let $P'$ be an alternate path with the same endpoints as $P$. We will argue that $w(P')>w(P)$. 

Any path from $C_i$ to $C_{i+1}$ must use a cross-cycle edge, which has weight 1. If $P'$ uses another cross-cycle edge, $P'$ gains an additional weight of 1. If $P'$ uses an edge in $C_i$, $P'$ gains weight $1/3^i$. In either of these cases, we have \[w(P')\geq 1+1/3^i>1+2/3^{i+1} = w(P).\] Thus, the only case left is consider is when $P'$ uses exactly one cross-cycle edge, and does not use any edges from $C_i$. 

The only edge from $v^k_i$ that is not in $C_i$ is the first edge of $P$: $(v^k_i, v^j_{i+1})$, so $P'$ must begin with this edge. From $v^j_{i+1}$, the remainder of $P'$ must be within $C_{i+1}$, since $P'$ does not use another cross-cycle edge. Because $C_{i+1}$ is a cycle, there are exactly two simple paths from $v^j_{i+1}$ to $v^{j+2\pmod 5}_{i+1}$: $P$ takes the one on $2$ edges, while the other path goes in the other direction around the $5$-cycle, so it has 3 edges and is longer than $P$. This completes the proof. 
\end{proof}

\begin{lemma}\label{lem:aspectundir}
Any reassignment of edge weights to $G$ such that all paths in $\mathcal{P}$ remain shortest has aspect ratio $2^{\Omega(n)}$.
\end{lemma}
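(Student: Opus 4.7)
My plan is to follow the summation argument from \cref{lem:aspectdir}, establishing the inductive inequality $w_H(C_i) > 2 w_H(C_{i+1})$ for each $i \in \{1, \ldots, n/5 - 1\}$. Iterating gives $w_H(C_1) > 2^{n/5-1} w_H(C_{n/5})$; since some edge of $C_1$ has weight at least $w_H(C_1)/5$ while some edge of $C_{n/5}$ has weight at most $w_H(C_{n/5})/5$ (by pigeonhole on the 5 edges per cycle), this yields aspect ratio $2^{\Omega(n)}$.

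The core step pairs each ``blue'' path $P \in \mathcal{P}$ starting at $v^k_i$ with a ``red'' 2-edge alternative $P' = v^k_i \to v^{k+1}_i \to v^{\sigma(k+1)}_{i+1}$, where $\sigma(k) = 2k-1 \pmod 5$ encodes the shift-by-2 cross-cycle pattern. A quick computation shows $\sigma(k)+2 \equiv \sigma(k+1) \pmod 5$, confirming that $P$ and $P'$ share the endpoint $v^{\sigma(k+1)}_{i+1}$. By \cref{lem:spundir}, $P$ is the unique shortest path in $G$ between these endpoints, so any shortest-paths preserving $H$ must also have $P$ as its unique shortest path between them, giving the strict inequality $w_H(P) < w_H(P')$.

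Summing these five inequalities over $k \in \{1,\ldots,5\}$, the cross-cycle edges cancel: both the left side (using the cross-cycle edges leaving $v^k_i$) and the right side (using the cross-cycle edges leaving $v^{k+1}_i$) enumerate the same full set of 5 cross-cycle edges between $C_i$ and $C_{i+1}$, since both $k$ and $k+1 \pmod 5$ range over $\{1,\ldots,5\}$. The left-hand cycle-edge contribution is $2 w_H(C_{i+1})$, since as $k$ varies, the ``first'' cycle edge $(v^{\sigma(k)}_{i+1}, v^{\sigma(k)+1}_{i+1})$ and the ``second'' edge $(v^{\sigma(k)+1}_{i+1}, v^{\sigma(k)+2}_{i+1})$ each traverse all 5 edges of $C_{i+1}$ once (using that $\sigma$ permutes $\{1,\ldots,5\}$). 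The right-hand contribution is $w_H(C_i)$, since the single cycle edge $(v^k_i, v^{k+1}_i)$ traverses each edge of $C_i$ exactly once across the 5 paths. This simplifies to $w_H(C_i) > 2 w_H(C_{i+1})$, as required.

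The main obstacle is engineering the cross-cycle cancellation, which is precisely why the construction uses a shift-by-2 pattern: the permutation $\sigma$ is a bijection on $\{1,\ldots,5\}$ because $\gcd(2,5) = 1$, which is what allows both cross-cycle sums to enumerate the same set of edges. A shift-by-1 (identity) pattern, for example, would force $P'$ to use two edges of $C_i$ rather than one, collapsing the analysis to the useless bound $w_H(C_i) > w_H(C_{i+1})$. Given that the structural guarantee is baked into the construction, the remaining calculation is straightforward bookkeeping paralleling \cref{lem:aspectdir}.
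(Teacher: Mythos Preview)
Your proposal is correct and follows essentially the same argument as the paper's own proof: pair each blue path $P \in \mathcal{P}$ with the red two-edge alternative $P'$, use uniqueness of $P$ in $G$ to deduce $w_H(P) < w_H(P')$ in any shortest-paths preserving $H$, sum the five inequalities so that the cross-cycle contributions cancel, and conclude $w_H(C_i) > 2\,w_H(C_{i+1})$. Your explicit introduction of the permutation $\sigma(k) = 2k-1 \pmod 5$ and the verification $\sigma(k)+2 \equiv \sigma(k+1)$ make the endpoint-matching and the cancellation bookkeeping more transparent than the paper's terser presentation, but the underlying idea is identical.
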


\begin{proof}
    Let $H=(V,E,w_H)$ be a shortest-paths-preserving graph of $G$. We will show that for all $i\in \{1,\dots,n/5-1\}$, we have $w_H(C_i)> 2\cdot w_H(C_{i+1})$. Since there are $n/5$ cycles, this implies that the aspect ratio of $w_H$ is $2^{\Omega(n)}$.

    First, we claim that for every path $P\in\mathcal{P}$ from $C_i$ to $C_{i+1}$, there is a not-shortest 2-edge path $P'$ with the same endpoints as $P$, such that $P'$ includes one edge in $C_i$ and one cross-cycle edge. ($P$ and $P'$ correspond to the blue and red paths respectively in Figure \ref{fig:undirpaths}.) The reason for this is by the construction of the cross-cycle edges. Recall that the cross-cycle edges are $(v^1_i, v^1_{i+1})$, $(v^2_i, v^3_{i+1})$, $(v^3_i, v^5_{i+1})$, $(v^4_i, v^2_{i+1}), (v^5_i, v^4_{i+1})$, where the pattern is that as the superscript of the first vertex increases by 1, the superscript of the second vertex increases by $2 \pmod 5$. Specifically, $P$ has vertices $v^k_i$, $v^j_{i+1}$, $v^{j+1\pmod 5}_{i+1}$, and $v^{j+2\pmod 5}_{i+1}$, while $P'$ has vertices $v^k_i$, $v^{k+1\pmod 5}_i$, $v^{j+2\pmod 5}_{i+1}$. 
    
    Since $P$ is the unique shortest path between its endpoints in $H$, we know that $w_H(P)<w_H(P')$. Taking the sum of this inequality for all 5 paths in $\mathcal{P}$ that go from $C_i$ to $C_{i+1}$, we get $w_H(C_i) + X > 2w_H(C_{i+i}) + X$, where $X$ is the sum of the weights of the cross-cycle edges from $C_i$ to $C_{i+1}$. The $X$ cancels, so we get $w_H(C_i) > 2w_H(C_{i+i})$, as desired. 
\end{proof}

%% Aaron continues from here
\subsection{Approximate Version}\label{sec:undirapprox}
We now show that exponential aspect ratio is required even if we allow stretch. But unlike in the directed case, we do not have a lower bound against arbitrary stretch, only against a small constant stretch.
\lbundirap*
%\aaron{Again totally new}
\paragraph{High-Level Description:} For the approximate version, the third issue for undirected graphs in the high-level description above becomes more troublesome: namely, that alternate paths can go through multiple levels of cycles and then return back to the original level. For the exact version for undirected graphs, we handled this by setting cross-cycle edges to have very high weight relative to the cycle edges; in particular, we set all of their weights to 1. 
This does not work for the approximate version for the following reason. 

Just as in the directed case (\cref{lem:spdirap}), we want to ensure that each path in $\mathcal{P}$ is the \emph{only} $\alpha$-approximate shortest path between its endpoints in $G$. This is no longer true for undirected graphs if we set the cross-cycle edges to have very large weight. In particular, recall that each path $P \in \mathcal{P}$ consist of one cross-cycle edge and two cycle edges from $C_{i+1}$, while there also exists a corresponding not-shortest path $P'$ between the same endpoints that consist of one cross-cycle edge and one cycle edge from $C_i$. The two paths both use exactly one cross-cycle edge, so if that weight is very large then it will dominate the edge-weights on the cycle, and $P'$ will be an $\alpha$-approximation to $P$

%the difference between the weight of each shortest path in $\mathcal{P}$ and its corresponding not-shortest path $P'$ is very small, so both paths would be $\alpha$-approximate. 

%In this case, each path $\mathcal{P}$ would not be the only $\alpha_G$-approximate shortest path between its endpoints in $G$.

To address this issue, we need to set the weight of the cross-cycle edges carefully; setting these weights either too large or too small does not work. This balancing act is the reason that our lower-bound construction only works for a small constant $\alpha$.

\paragraph{Construction:} The construction for our lower-bound graph $G = (V,E,w)$ as identical to the exact version for undirected graphs, except that each cross-cycle edge $e$ from $C_i$ to $C_{i+1}$ is given weight $w(e) = 1/3^{i-1}$.

\paragraph{Analysis:} We will use the same set $\mathcal{P}$ of paths as the exact undirected version (see Section \ref{sec:undir-analysis}). We will show in \cref{lem:spundirap} that each path in $\mathcal{P}$ is the \emph{only} $13/12$-approximate shortest path between its endpoints in $G$. Since $\alpha\leq 13/12$, this means that for any $\alpha$-approximate shortest-path-preserving graph $H = (V,E,w_H)$ of $G$, each path in $\mathcal{P}$ is the unique shortest path between its endpoints in $H$. Combined with \cref{lem:aspectundir}, this implies that $H$ has aspect ratio $2^{\Omega(n)}$, as desired.
%In \cref{lem:aspectundir}, we have already shown that any reassignment of edge weights to $G$ such that all paths in $\mathcal{P}$ remain shortest paths has aspect ratio $2^{\Omega(n)}$. 
Thus, it remains only to prove \cref{lem:spundirap}:

\begin{lemma}\label{lem:spundirap}
    Each path in $\mathcal{P}$ is the only $13/12$-approximate shortest path between its endpoints in $G$.
\end{lemma}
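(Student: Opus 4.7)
The plan is to follow the same strategy as Lemma~\ref{lem:spdirap}: directly compute $w(P)$ for $P \in \mathcal{P}$, then for every alternative path $P'$ between the same endpoints, show that $w(P') > (13/12) \cdot w(P)$. The path $P$ consists of one cross-cycle edge of weight $1/3^{i-1}$ plus two edges of $C_{i+1}$ of weight $1/3^{i+1}$ each, so $w(P) = 11/3^{i+1}$ and the target we must beat is $(13/12) \cdot w(P) = 143/(12 \cdot 3^{i+1})$. I split by the number of cross-cycle edges in $P'$; since $P'$ starts in $C_i$ and ends in $C_{i+1}$, this number must be odd.

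In Case A (exactly one cross-cycle edge), $P'$ walks in $C_i$, crosses, and then walks in $C_{i+1}$. Parametrizing by the number of cycle edges $a$ in $C_i$ and $b$ in $C_{i+1}$, a short calculation reduces the stretch inequality to a simple condition like $9a + b > 35/12$. The subcase $a \geq 1$ is immediate since $9 > 35/12$. The subcase $a = 0$ forces the cross-cycle edge to originate at $v^k_i$ (arriving at $v^j_{i+1}$), after which the two simple paths around the $5$-cycle $C_{i+1}$ from $v^j_{i+1}$ to $v^{j+2}_{i+1}$ give $b = 2$ (which recovers $P$) or $b = 3$ (the long way around). The long-way-around case is tight but still works: $3 > 35/12$, equivalently $144 > 143$.

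In Case B ($P'$ uses at least three cross-cycle edges), I lower bound $w(P')$ using only the cross-cycle weights. View the sequence of cycles visited by $P'$ as a walk on $\mathbb{Z}$ starting at $i$, ending at $i+1$, with $\pm 1$ steps; each step between levels $\ell$ and $\ell+1$ contributes weight $1/3^{\ell-1}$. A parity argument shows that the number of steps at level $i$ is odd and at every other level is even. Enumerating length-$3$ walks under this constraint, the minimum weight is $1/3^{i-1} + 2/3^i = 15/3^{i+1}$, attained by the walk $i \to i+1 \to i+2 \to i+1$; the other length-$3$ walks either triple the level-$i$ edge or dip into $C_{i-1}$ (whose edges weigh $1/3^{i-2}$), and longer walks only add more weight. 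Since $15 \cdot 12 = 180 > 143$, any Case B path exceeds the target.

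The main obstacle is the delicate balancing in the weight assignment: cross-cycle weights must be large enough that a detour through $C_{i+2}$ is costly (controlling Case B), but small enough that the two $C_{i+1}$ edges still dominate the single cross-cycle edge in $P$ in the relative comparison (controlling Case A with $a=0,\, b=3$ and with $a=1,\, b=0$, both of which give the tight ratio $12/11$). This explains both why the specific weight $1/3^{i-1}$ is chosen and why the stretch constant we can handle is $13/12$ rather than something substantially larger; a different construction would be needed to rule out $(1+\eps)$-preservers for larger constant $\eps$.
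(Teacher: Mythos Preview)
Your proof is correct and follows the same case split as the paper. Two small remarks. First, there is an arithmetic slip in Case~A: an edge of $C_i$ has weight $1/3^i = 3/3^{i+1}$, so the reduced inequality is $3a + b > 35/12$, not $9a + b > 35/12$. This is harmless, since $3 > 35/12$ still dispatches the subcase $a \ge 1$, and your identification of $(a,b) \in \{(1,0),(0,3)\}$ as the tight cases with ratio $12/11$ is right. Second, your Case~B is more work than needed. The paper simply observes that the \emph{first two} cross-cycle edges of $P'$ already suffice: the first is incident to $C_i$ (hence weight $\ge 1/3^{i-1}$) and the second is incident to $C_{i-1}$ or $C_{i+1}$ (hence weight $\ge 1/3^i$), giving $w(P') \ge 1/3^{i-1} + 1/3^i = 12/3^{i+1}$, which is exactly the same threshold $12/11 > 13/12$ that governs the other tight subcases. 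Your walk-on-$\mathbb{Z}$ parity argument recovers the sharper bound $15/3^{i+1}$, but the extra precision is not used, and the informal step ``longer walks only add more weight'' would need the observation that with $c_i = 1$ the remaining crossings must include two at level $i{-}1$ or two at level $i{+}1$ (which you implicitly rely on).
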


\begin{proof}
   Fix a path $P\in\mathcal{P}$. $P$ is from $v^k_i$ to  $v^{j+2\pmod 5}_{i+1}$ (where $j$ is such that $(v^k_i, v^j_{i+1})$ is a cross-cycle edge). We calculate $w(P)$ as follows: the first term is for the cross-cycle edge, while the second term is the sum of the two edge weights in $C_{i+1}$: 
   \[w(P)=1/3^{i-1}+2/3^{i+1}.\]
   Let $P'$ be an alternate path with the same endpoints as $P$. We will argue that $w(P')>13/12\cdot w(P)$. 

Any path from $C_i$ to $C_{i+1}$ must use a cross-cycle edge between $C_i$ and $C_{i+1}$, which has weight $1/3^{i-1}$. If $P'$ uses another cross-cycle edge, its next cross-cycle edge is from $C_{i+1}$ to either $C_{i+2}$ to $C_i$, so has weight at least $1/3^i$. This means that we would have: \begin{align*}w(P')&\geq 1/3^{i-1}+1/3^{i}\\&>13/12\cdot(1/3^{i-1}+2/3^{i+1})\\&=13/12\cdot w(P).\end{align*} Thus, suppose $P'$ uses exactly one cross-cycle edge.

Furthermore, each edge in $C_i$ is of weight at least $1/3^i$, so due to exactly the same string of inequalities, if $P'$ used any edge in $C_i$ then we would again have $w(P')>13/12\cdot w(P)$. Thus, suppose $P'$ does not use any edges from $C_i$. 

The only edge from $v^k_i$ that is not in $C_i$ is the first edge of $P$: $(v^k_i, v^j_{i+1})$, so $P'$ must begin with this edge. From $v^j_{i+1}$, recall that $P'$ cannot use another cross-cycle edge so the remainder of $P'$ must be within $C_{i+1}$. Because $C_{i+1}$ is a cycle, there are exactly two simple paths from $v^j_{i+1}$ to $v^{j+2\pmod 5}_{i+1}$: $P$ takes the one on 2 edges and the other one goes the other way around the 5-cycle and has 3 edges. These edges have weight $1/3^{i+1}$, so the total weight of 3 of these edges is $1/3^{i}$. Thus, if $P'$ takes the 3-edge path, exactly the same string of inequalities as above implies $w(P')>13/12\cdot w(P)$.
\end{proof}

\section{Upper Bound for DAGs}\label{sec:dagub}

In this section, we show that in contrast to general directed graphs, any DAG $G$ admits a shortest-path-preserving graph with aspect ratio $O(n)$.

\ub*

%Our goal is to define a new function $w_H$ on $G$ that has small aspect ratio. 
Since $G$ is a DAG, we can label the vertices $v_1, ..., v_n$ according to their topolgical order; that is, for every edge $(v_i, v_j) \in E$ we have $j > i$. 
%We define $\tau(x)$ to be it's position in this ordering; that is, $\tau(v_i) = i$. 

Let $\wg$ be the heaviest edge weight in the input graph $G = (V,E,w)$. We now define a new weight function $w_H$: given any edge $(v_i, v_j)$ in $G$, set
$$w_H(v_i,v_j) = w(v_i, v_j) + \wg \cdot (j - i)$$

We now argue that $H = (V,E,w_H)$ is a shortest-path-preserver of $G$.

\begin{lemma}
\label{lem:dag-price}
Every shortest path in $H$ is a shortest path in $G$ and vice versa.
\end{lemma}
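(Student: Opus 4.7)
The plan is to show something stronger: for any fixed pair of endpoints $v_i, v_j$, the weight function $w_H$ differs from $w_G$ by an additive constant that depends only on $i$ and $j$ (not on the path). This immediately implies a bijection between shortest paths in $G$ and in $H$ between each endpoint pair.

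Concretely, I would fix any two vertices $v_i, v_j$ and let $P = (v_{k_0}, v_{k_1}, \ldots, v_{k_\ell})$ be any path from $v_i = v_{k_0}$ to $v_j = v_{k_\ell}$ in $G$ (equivalently in $H$, since the two graphs share the same edge set). Expanding the definition of $w_H$ gives
\[
w_H(P) \;=\; \sum_{m=1}^{\ell}\bigl( w(v_{k_{m-1}}, v_{k_m}) + \wg \cdot (k_m - k_{m-1}) \bigr) \;=\; w(P) \;+\; \wg \sum_{m=1}^{\ell} (k_m - k_{m-1}).
\]
The key step is then to observe that the sum on the right telescopes. Because $G$ is a DAG whose vertices are labeled in topological order, every edge $(v_a, v_b) \in E$ satisfies $b > a$, and therefore every path follows strictly increasing indices. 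Thus $\sum_{m=1}^{\ell}(k_m - k_{m-1}) = k_\ell - k_0 = j - i$, giving the clean identity
\[
w_H(P) \;=\; w(P) \;+\; \wg \cdot (j - i).
\]

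Since the additive offset $\wg \cdot (j-i)$ depends only on the endpoints, the ordering of $v_i \leadsto v_j$ paths by weight is identical under $w$ and $w_H$. In particular, $P$ minimizes $w_H$ among $v_i \leadsto v_j$ paths if and only if it minimizes $w$ among them, proving both directions of the lemma.

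I do not anticipate a real obstacle: the whole argument rests on the telescoping identity, which in turn is just the observation that in a topologically ordered DAG every path has monotonically increasing indices. The only thing to be slightly careful about is ensuring that the identity is applied to \emph{every} path between $v_i$ and $v_j$ (not only some canonical one), which is exactly what makes the additive offset path-independent and hence preserves the shortest-path ranking.
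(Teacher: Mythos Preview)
Your proposal is correct and is essentially the same as the paper's proof: both compute $w_H(P) = w(P) + \wg\cdot(j-i)$ via the telescoping sum and conclude that the additive offset is path-independent, so the shortest-path ordering is preserved in both directions. (One minor remark: the telescoping identity $\sum_{m}(k_m - k_{m-1}) = k_\ell - k_0$ is pure algebra and does not actually require the indices to be monotone; the DAG/topological-order assumption is used elsewhere to ensure $w_H$ is positive, not for this step.)
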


\begin{proof}
This follows directly from the well-known fact that price functions do not change shortest paths; note that $w_H(v_i,v_j)$ are precisely the reduced weights obtained from price function $\phi(v_i) = \wg \cdot i$. See e.g. \cite{Johnson77} for more details on price functions. 

For the sake of completeness, we also prove the Lemma from scratch in Section \ref{app:dag-price} of the appendix.

\end{proof}

\begin{lemma} 
The weight function $w_H$ has aspect ratio at most $n+1$.
\end{lemma}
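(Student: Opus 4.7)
The plan is to directly bound the maximum and minimum edge weights of $H$ in terms of $W$, and observe that the $W$ factor cancels in the ratio. Concretely, I would proceed in two short steps.

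First, I would upper-bound the maximum edge weight in $H$. By the definition of the reweighting, for any edge $(v_i, v_j) \in E$ we have
\[
w_H(v_i, v_j) = w(v_i, v_j) + W \cdot (j - i).
\]
Since $W$ is defined as the maximum edge weight in $G$, we have $w(v_i, v_j) \leq W$. Moreover, since the vertices are labeled by topological order $v_1, \dots, v_n$ and edges only go forward, $j - i \leq n - 1$. Putting these bounds together yields $w_H(v_i, v_j) \leq W + W(n-1) = Wn$.

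Next, I would lower-bound the minimum edge weight in $H$. Since the input edge weights are positive, $w(v_i, v_j) \geq 0$, and since $(v_i, v_j)$ is a forward edge in the topological order we have $j - i \geq 1$. Therefore $w_H(v_i, v_j) \geq 0 + W \cdot 1 = W$.

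Combining the two bounds, the aspect ratio of $H$ is at most $Wn / W = n \leq n+1$, which is the stated bound. The argument is essentially a direct calculation, so I do not anticipate any real obstacle; the only thing to be careful about is to ensure that $j - i \geq 1$ for every edge, which is guaranteed because $G$ is a DAG with vertices indexed by topological order.
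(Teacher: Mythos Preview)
Your proposal is correct and follows exactly the same direct-bounding approach as the paper: upper-bound the maximum via $w(v_i,v_j)\le W$ and $j-i\le n-1$, lower-bound the minimum via $j-i\ge 1$, and take the ratio. In fact your bound of $n$ is slightly tighter than the paper's $n+1$, since the paper uses the looser estimate $j-i < n$ rather than $j-i \le n-1$.
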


\begin{proof}
Recall that we defined $\wg$ to be the heaviest edge weight in $G$. For any edge $(v_i,v_j)$ we have $1 \leq j-i < n$. We can thus conclude that the maximum edge weight in $H$ is at most $\wg + n\wg = (n+1)\wg$, while the minimum edge weight in $H$ is at least $\wg$. This implies that $w_H$ has aspect ratio at most $n+1$.
\end{proof}

\section{Lower Bound for DAGs}\label{sec:lbdag}

In this section we will prove an exponential lower bound for two-sided approximation on DAGs:

\lbdagap*

\subsection{Construction}

The lower-bound graph $G=(V,E,w)$ is the $\sqrt{n}\times\sqrt{n}$ grid graph where all horizontal edges are directed to the right and all vertical edges are directed upwards (see \cref{fig:grid}). We will refer to the vertices by their (row, column) coordinates, where the vertex in the top left corner is $(0,0)$. 

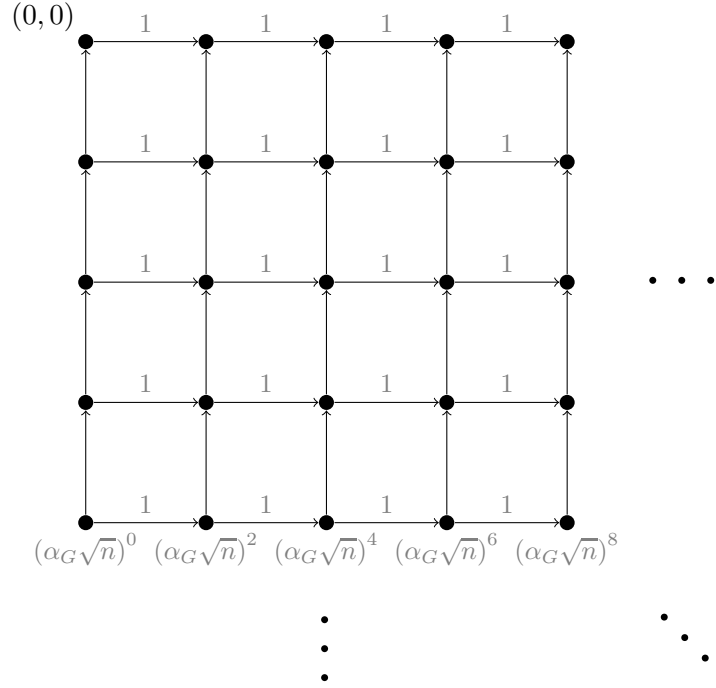
\begin{figure}[h]
\begin{center}
\begin{tikzpicture}[scale=0.8]
    \def\gridsize{4} % You can change this for a bigger or smaller grid.

    % Create the nodes with their coordinates as labels
    \foreach \x in {0,1,...,\gridsize}
    {
        \foreach \y in {0,1,...,\gridsize}
        {
            \node[inner sep=1pt,circle,fill,minimum width=0.2cm] (n-\x-\y) at (2*\x,2*\y) {};
        }
    }
    \node [above left] at (0, 2*\gridsize) {$(0, 0)$};

    % Draw the directed edges (going up)
    \foreach \x in {0,1,...,\gridsize}
    {
        \foreach \y in {0,1,...,\numexpr\gridsize-1\relax}
        {
            \draw[->] (n-\x-\y) -- (n-\x-\the\numexpr\y+1\relax);
        }
    }

    % Draw the directed edges (going right)
    \foreach \x in {0,1,...,\numexpr\gridsize-1\relax}
    {
        \foreach \y in {0,1,...,\gridsize}
        {
            \draw[->] (n-\x-\y) to node [midway, above, gray] {\small $1$} (n-\the\numexpr\x+1\relax-\y);
        }
    }

    % bot labels
    \foreach \x in {0,1,2,...,\gridsize}
    {
        \pgfmathtruncatemacro\twox{2*\x}
        \node [below, gray] at (2*\x, 0) {\small $\left(\alpha_G \sqrt{n}\right)^{\twox}$};
    }

    %\dots

    \node [rotate=90] at (\gridsize, -2) {\Huge $\cdots$};
    \node at (2*\gridsize+2, \gridsize) {\Huge $\cdots$};
    \node [rotate=-45] at (2*\gridsize+2, -2) {\Huge $\cdots$};

\end{tikzpicture}
\caption{\label{fig:grid} The graph $G$. The weights under each column indicate that every edge in that column has that weight.}
\end{center}
\end{figure}
%%%%%%%%%%%%%%%%%

The edge weights are defined as follows. All horizontal edges have weight $1$. The weight of each vertical edge depends on its column. All vertical edges in column $j$ have weight $(\alpha_G\cdot \sqrt{n})^{2j}$. 

This completes the construction of $G$. Note that $G$ is a DAG.

\subsection{Analysis} The analysis is broken into two parts. First, in \cref{lem:hor,lem:vert} we will show that each path from a particular collection of paths is the \emph{only} $\alpha_G$-approximate shortest path between its endpoints. This means that in any $(\alpha_H\to\alpha_G)$-stretch shortest-paths preserving graph $H=(V,E,w_H)$, each of these paths must be the \emph{only} $\alpha_H$-approximate shortest path between its endpoints. Then, in \cref{lem:rowcol} we will prove by induction on the size of the grid, that the sum of edge weights in the last row and the last column of $H$ must be exponentially large.

\begin{lemma} \label{lem:hor}
Let $s=(i,j)$ be a vertex in $G$, and let $t=(i-1,k)$ where $k>j$ (that is, $t$ is exactly one row above and at least one column to the right of $s$). Then, the shortest $st$-path $P$ uses one vertical edge followed by a series of horizontal edges. Furthermore, $P$ is the only $\alpha_G$-approximate $st$-shortest path.
\end{lemma}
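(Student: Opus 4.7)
The plan is to characterize all $s \to t$ paths in $G$ and then show that the path described in the lemma beats every other by a multiplicative factor of more than $\alpha_G$. The key structural observation is that, because vertical edges only go upward and $t$'s row differs from $s$'s by exactly one, every $s \to t$ path uses exactly one vertical edge; and because horizontal edges only go right, it also uses exactly $k - j$ horizontal edges. Hence each such path is determined by the column $j' \in \{j, j+1, \dots, k\}$ in which the vertical edge sits: it goes horizontally from $(i,j)$ to $(i, j')$, takes the vertical edge up to $(i-1, j')$, then goes horizontally to $(i-1, k)$. Its total weight is $(k - j) + (\alpha_G \sqrt{n})^{2j'}$.

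Since vertical-edge weights are strictly increasing in $j'$, this quantity is minimized at $j' = j$, yielding the path $P$ in the lemma with $w(P) = (k - j) + (\alpha_G \sqrt{n})^{2j}$. Any other $s\to t$ path uses some $j' \ge j + 1$, and so has weight at least
\[
(k - j) + (\alpha_G \sqrt{n})^{2j + 2} \;=\; (k - j) \;+\; \alpha_G^2\, n \cdot (\alpha_G \sqrt{n})^{2j}.
\]

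To conclude $\alpha_G$-approximate uniqueness, I would compare this lower bound with $\alpha_G \cdot w(P)$, using $k - j < \sqrt{n}$. The multiplicative gap $(\alpha_G \sqrt{n})^{2} = \alpha_G^2\, n$ between consecutive vertical-edge columns dominates both the horizontal contribution $(k-j) \le \sqrt{n}$ and the approximation slack $\alpha_G$. A short case split suffices: for $j \ge 1$ the vertical term $(\alpha_G \sqrt{n})^{2j}$ already dominates $w(P)$, so the ratio between the alternative and $w(P)$ is essentially $\alpha_G^2 n \gg \alpha_G$; for $j = 0$, horizontal edges contribute relatively more, but the alternative's weight is at least $\alpha_G^2 n$ while $w(P) \le \sqrt{n} + 1$, so the ratio is still at least about $\alpha_G^2 \sqrt{n} \gg \alpha_G$ for $\alpha_G > 1$ and $n$ sufficiently large.

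There is essentially no conceptual obstacle here: the proof is a reduction to the monotone structure of the grid plus a bookkeeping comparison. The step that requires the most care is the numerical check at the boundary $j = 0$, where vertical-edge weights are smallest and horizontal edges contribute most; the geometric spacing $(\alpha_G \sqrt{n})^{2j}$ between columns is calibrated precisely so that even in this worst case, any alternative placement of the vertical edge inflates the total weight by a factor far exceeding $\alpha_G$.
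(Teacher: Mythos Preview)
Your proposal is correct and follows essentially the same approach as the paper: you characterize every $s\to t$ path by the column of its single vertical edge, identify $j'=j$ as the minimizer, and then compare any alternative to $\alpha_G\cdot w(P)$. The only difference is cosmetic: where you split into the cases $j\ge 1$ and $j=0$, the paper handles both at once by the single bound $k-j+(\alpha_G\sqrt{n})^{2j}<(\alpha_G\sqrt{n})^{2j+1}$, which directly gives $w(P')/w(P)>(\alpha_G\sqrt{n})^{2(j+1)}/(\alpha_G\sqrt{n})^{2j+1}\ge \alpha_G$.
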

\begin{proof}
    Since edges are only directed up and to the right, and $t$ is exactly one row above $s$, the only $st$-paths use a (possibly empty) series of horizontal edges, followed by a single vertical edge, followed by a (possibly empty) series of horizontal edges. Since all horizontal edges have weight $1$, the contribution of these edges to $w(P)$ is $k-j$. Let $\ell$ be the column of the single vertical edge $e$ in $P$ (so $j\leq \ell\leq k$). Then, $w(e)=(\alpha_G\cdot \sqrt{n})^{2\ell}$. This quantity is minimized when $\ell$ is minimized, so $\ell=j$. This means that the shortest $st$-path $P$ has its vertical edge in column $j$ (the column containing $s$). This proves the first part of the claim.

    To prove that $P$ is the only $\alpha_G$-approximate $st$-shortest path, we consider any other $st$-path $P'$. Let $\ell'>j$ be the column of the single vertical edge in $P'$. Then, $w(P')=k-j+(\alpha_G\cdot \sqrt{n})^{2\ell'}\geq (\alpha_G\cdot \sqrt{n})^{2(j+1)}$. 
    Then,\begin{align*}\frac{w(P')}{w(P)}&\geq \frac{(\alpha_G\cdot \sqrt{n})^{2(j+1)}}{k-j+(\alpha_G\cdot \sqrt{n})^{2j}}\\
 &>\frac{(\alpha_G\cdot \sqrt{n})^{2(j+1)}}{(\alpha_G\cdot \sqrt{n})^{2j+1}}\\
    &\geq \alpha_G.
    \end{align*}
\end{proof}

We also make a symmetric claim:

\begin{lemma} \label{lem:vert}
Let $s=(i,j)$ be a vertex in $G$, and let $t=(k,j+1)$ where $k<i$ (that is, $t$ is at least one row above and exactly one column to the right of $s$). Then, the shortest $st$-path $P$ uses a series of vertical edges followed by exactly one horizontal edge. Furthermore, $P$ is the only $\alpha_G$-approximate $st$-shortest path.
\end{lemma}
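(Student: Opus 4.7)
The plan is to mirror the proof of Lemma~\ref{lem:hor}, which is almost fully symmetric in the roles of rows and columns. Since every edge is directed up or to the right, and $t$ lies exactly one column to the right of $s$ and strictly above, every $st$-path consists of exactly one horizontal edge together with the required $i-k$ vertical steps. Before the single horizontal edge is taken we are still in column $j$; after it we are permanently in column $j+1$. So each $st$-path is completely described by the number $m \in \{0,1,\dots,i-k\}$ of vertical edges taken in column $j$ before the horizontal crossing.

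First I would write the weight of such a path as $m(\alpha_G\sqrt{n})^{2j} + 1 + (i-k-m)(\alpha_G\sqrt{n})^{2(j+1)}$. Since each column-$(j+1)$ vertical edge is an $(\alpha_G\sqrt{n})^2$-factor more expensive than a column-$j$ vertical edge, this expression is strictly minimized by choosing $m = i-k$, i.e.\ taking all vertical work in column $j$ first and making the single horizontal hop at the very end. This is precisely the path $P$ described in the statement, which settles the structural assertion.

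Next, for the $\alpha_G$-approximate uniqueness claim, I would plug in to get $w(P) = (i-k)(\alpha_G\sqrt{n})^{2j} + 1$, then note that any alternative path $P'$ must route at least one vertical edge through column $j+1$, and so $w(P') \geq (\alpha_G\sqrt{n})^{2(j+1)}$. Using the crude bound $i-k \leq \sqrt{n}$ (since the grid has only $\sqrt{n}$ rows) gives $w(P) < (\alpha_G\sqrt{n})^{2j+1}$, exactly matching the analogous bound in Lemma~\ref{lem:hor}. The ratio $w(P')/w(P)$ is then at least $(\alpha_G\sqrt{n})^{2(j+1)}/(\alpha_G\sqrt{n})^{2j+1} = \alpha_G\sqrt{n} \geq \alpha_G$, which is the required approximation lower bound.

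I do not anticipate any real obstacle: the argument is a column-vs-row transposition of Lemma~\ref{lem:hor}, and the only bookkeeping point worth being careful about is that the optimum genuinely places the single horizontal edge \emph{last} rather than interleaving it among the vertical edges, but this is forced by the strict monotonicity of the path weight in $m$ and needs no separate argument.
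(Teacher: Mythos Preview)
Your proposal is correct and follows essentially the same argument as the paper: parametrize each $st$-path by the number of vertical edges taken in column $j$ before the single horizontal crossing (the paper calls this $b$, you call it $m$), observe the weight is strictly decreasing in this parameter to identify $P$, then lower-bound any alternative by $(\alpha_G\sqrt{n})^{2(j+1)}$ and upper-bound $w(P)$ by $(\alpha_G\sqrt{n})^{2j+1}$ to get the ratio. The only minor point is that the bound $w(P) < (\alpha_G\sqrt{n})^{2j+1}$ technically needs either $i-k \leq \sqrt{n}-1$ (which does hold in a $\sqrt{n}\times\sqrt{n}$ grid) or a ``for sufficiently large $n$'' caveat, which the paper also invokes.
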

\begin{proof}
    Since edges are only directed up and to the right, and $t$ is exactly one column to the right of $s$, the only $st$-paths use a (possibly empty) series of vertical edges, followed by a single horizontal edge, followed by a (possibly empty) series of vertical edges. The contribution of the single horizontal edge to $w(P)$ is 1. The contribution of the vertical edges is as follows. Let $b$ be the number of vertical edges on $P$ before the horizontal edge. Then the sum of the weights of the vertical edges is $b(\alpha_G\cdot \sqrt{n})^{2j} + (i-k-b)(\alpha_G\cdot \sqrt{n})^{2(j+1)}$. This quantity is minimized when $b$ is maximized, so $b=i-k$. This proves the first part of the claim.

    To prove that $P$ is the only $\alpha_G$-approximate $st$-shortest path, we consider any other $st$-path $P'$. Let $b'<i-k$ be the number of vertical edges on $P'$ before the horizontal edge. Then, \begin{align*}w(P')&=1+b' (\alpha_G\cdot \sqrt{n})^{2j} + (i-k-b') (\alpha_G\cdot \sqrt{n})^{2(j+1)} \\&\geq (i-k-1) (\alpha_G\cdot \sqrt{n})^{2j} + (\alpha_G\cdot \sqrt{n})^{2(j+1)}\\&\geq(\alpha_G\cdot \sqrt{n})^{2(j+1)}.\end{align*}

We know that $w(P)=1+ (k-i) (\alpha_G\cdot \sqrt{n})^{2j}$. Thus,\begin{align*}\frac{w(P')}{w(P)}&\geq \frac{(\alpha_G\cdot \sqrt{n})^{2(j+1)}}{1+ (i-k) (\alpha_G\cdot \sqrt{n})^{2j}}\\
  &\geq \frac{(\alpha_G\cdot \sqrt{n})^{2(j+1)}}{1+ \sqrt{n} (\alpha_G\cdot \sqrt{n})^{2j}}\\
    &> \frac{(\alpha_G\cdot \sqrt{n})^{2(j+1)}}{(\alpha_G\cdot \sqrt{n})^{2j+1}} \text{\hspace{1em} for sufficiently large $n$}\\
    &\geq \alpha_G.
    \end{align*}
\end{proof}

Now we will prove that in $H$, the sum of last column and the last row of the grid must have exponentially large weight. We can assume without loss of generality that the minimum edge weight in $H$ is 1. 

\begin{lemma}\label{lem:rowcol}
    Suppose that the minimum edge weight in $H$ is $1$. Then, the sum of edge weights (in $H$) of the last row and last column is at least $(\alpha_H)^{\sqrt{n}-1}$.
\end{lemma}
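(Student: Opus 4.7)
The plan is to combine instances of Lemma~\ref{lem:hor} applied to every adjacent ``$2 \times 2$'' block of grid vertices. For each $(i,j)$ with $1 \le i \le \sqrt{n}-1$ and $0 \le j \le \sqrt{n}-2$, I apply Lemma~\ref{lem:hor} with $s=(i,j)$ and $t=(i-1,j+1)$ to conclude that in $G$ the unique $\alpha_G$-approximate $st$-shortest path is the two-edge path that goes up then right. By the $(\alpha_H\to\alpha_G)$-preserver property, this path must also be the unique $\alpha_H$-approximate shortest path in $H$, so the alternative two-edge path that goes right then up has $H$-weight strictly greater than $\alpha_H$ times that of the canonical path. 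Writing $V(i,j)$ for the $H$-weight of the vertical edge from $(i,j)$ to $(i-1,j)$ and $H(i,j)$ for that of the horizontal edge from $(i,j)$ to $(i,j+1)$, this yields the family of $(\sqrt{n}-1)^2$ inequalities
\[
H(i,j) + V(i,j+1) \;>\; \alpha_H\bigl[V(i,j) + H(i-1,j)\bigr].
\]

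The key step is to multiply the $(i,j)$-th inequality by $\alpha_H^{-i-j}$ and sum all of them together. This particular weighting is chosen so that the contributions of every \emph{interior} edge cancel exactly. For instance, an interior vertical edge $V(i,j)$ with $1 \le j \le \sqrt{n}-2$ receives a positive coefficient $\alpha_H^{-i-(j-1)}$ from the left-hand side of the $(i,j-1)$-constraint and a matching negative coefficient $\alpha_H \cdot \alpha_H^{-i-j}$ from the right-hand side of the $(i,j)$-constraint; an analogous cancellation occurs for interior horizontal edges. What remains on the left-hand side is a positive linear combination of last-row horizontals $H(\sqrt{n}-1,j)$ and last-column verticals $V(i,\sqrt{n}-1)$, while the right-hand side is a positive linear combination of first-row horizontals $H(0,j)$ and first-column verticals $V(i,0)$, whose coefficients are all at least $\alpha_H^{0}=1$ near the corner $(1,0)$.

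Finally, using the hypothesis that the minimum edge weight in $H$ is $1$, the right-hand side can be bounded below by a constant: the $V(1,0)$ and $H(0,0)$ terms alone contribute at least $2$. On the left-hand side, every coefficient is at most $\alpha_H^{1-\sqrt{n}}$ (this maximum is attained at $V(1,\sqrt{n}-1)$ and $H(\sqrt{n}-1,0)$), so the left-hand side is at most $\alpha_H^{1-\sqrt{n}}$ times the sum of $H$-weights in the last row and last column. Rearranging yields the target bound $\alpha_H^{\sqrt{n}-1}$, in fact with a factor of $2$ to spare. The main conceptual hurdle is recognizing that the weighting $\alpha_H^{-i-j}$ is what induces the desired telescoping of interior edges; once this trick is in hand, the rest of the argument is routine coefficient bookkeeping and a single max-coefficient bound.
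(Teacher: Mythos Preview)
Your argument is correct and takes a genuinely different route from the paper. The paper proves the lemma by induction on the grid dimension $L$: at each step it applies Lemma~\ref{lem:hor} and Lemma~\ref{lem:vert} once each, along the entire $(L{-}1)$st row and column, to show that the weight of the last row and column of the $(L{+}1)\times(L{+}1)$ grid exceeds $\alpha_H$ times that of the $L\times L$ subgrid. You instead write down all $(\sqrt{n}-1)^2$ local $2\times 2$ inequalities at once and collapse them in a single weighted sum with geometric weights $\alpha_H^{-i-j}$. Your telescoping claim checks out: an interior $V(i,j)$ picks up $\alpha_H^{-i-j+1}$ on the left (from block $(i,j{-}1)$) and $\alpha_H^{1-i-j}$ on the right (from block $(i,j)$), which cancel; similarly for interior $H(i,j)$. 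The surviving left-hand terms are exactly the last-row and last-column edges with coefficients bounded by $\alpha_H^{1-\sqrt{n}}$, and the right-hand side is at least $V(1,0)+H(0,0)\ge 2$, giving the bound with the spare factor you note. One small bonus of your approach is that it only invokes Lemma~\ref{lem:hor} (on adjacent columns), never Lemma~\ref{lem:vert}; the paper's inductive step needs both. On the other hand, the paper's induction is arguably more transparent about \emph{why} the bound grows by exactly one factor of $\alpha_H$ per layer of the grid.
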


\begin{proof}
    The proof is by induction on the dimension of the grid. We will show that for any $L$, when $H$ is an $L\times L$ grid, the sum of weights in the last row and the last column is at least $(\alpha_H)^{L-1}$.
    
    \paragraph{Base Case.} In the base case $H$ is a $2\times 2$ grid. Let $P'$ be the path $(1,0)\rightarrow (1,1)\rightarrow (0,1)$; that is $w_H(P')$ is the sum of weights in the last row and column. Our goal is to show that $w_H(P')\geq \alpha_H$ . Let $P$ be the shortest path from $(1,0)$ to $(0,1)$. By \cref{lem:hor}, $P$ uses one vertical edge followed by one horizontal edge; that is, $P=(1,0)\rightarrow (0,0)\rightarrow (0,1)$. By \cref{lem:hor}, we also know that $P'$, which has the same endpoints as $P$, is not an $\alpha_G$-approximate shortest path in $G$, and thus cannot be an $\alpha_H$ approximate shortest path in $H$. Since the minimum edge weight in $H$ is 1, $w_H(P)\geq 2$, so $w_H(P')\geq 2\cdot \alpha_H$, as desired.
    
    \paragraph{Inductive Hypothesis.} Suppose that when $H$ is an $L\times L$ grid, the sum of weights in the last row and the last column is at least $(\alpha_H)^{L-1}$.

    \paragraph{Inductive Step.} We will show that when $H$ is an $(L+1)\times (L+1)$ grid, the sum of edge weights in the last row and the last column is at least $(\alpha_H)^{L}$.

Let $H_L\subseteq H$ be the $L\times L$ grid that excludes the last row and last column of $H$. We will apply the inductive hypothesis on $H_L$.

We define the paths $X$, $X'$, $Y$, and $Y'$ as shown in \cref{fig:X}. $X$ is  the last row of $H_L$. $X'$ is $X$ shifted down by one row. $Y$ is the last column of $H_L$. $Y'$ is $Y$ shifted to the right by one column. That is, 
\begin{itemize}[itemsep=0pt]
\item $X$ is the horizontal path $(L-1,0)\rightarrow(L-1,L-1)$, 
\item $X'$ is the horizontal path $(L,0)\rightarrow (L,L-1)$, 
\item $Y$ is the vertical path $(L-1,L-1)\rightarrow(0,L-1)$, and 
\item $Y'$ is the vertical path $(L-1,L)\rightarrow(0,L)$.
\end{itemize}

%%%%%%%%%%%%%%%%%
\begin{figure}[h!]
\begin{center}
\begin{tikzpicture}[scale=0.8]
    \def\gridsize{8} % You can change this for a bigger or smaller grid.

    % Create the nodes with their coordinates as labels
    \foreach \x in {0,1,...,\gridsize}
    {
        \foreach \y in {0,1}
        {
            \node[inner sep=1pt,circle,fill,minimum width=0.2cm] (n-\x-\y) at (\x,\y) {};
        }
    }
    \foreach \x in {\the\numexpr\gridsize-1\relax,\gridsize}
    {
        \foreach \y in {0,1,...,\gridsize}
        {
            \node[inner sep=1pt,circle,fill,minimum width=0.2cm] (n-\x-\y) at (\x,\y) {};
        }
    }
    \node[inner sep=1pt,circle,fill,minimum width=0.2cm] (n-0-\gridsize) at (0,\gridsize) {};
    
    \node [above left] at (0, \gridsize) {$(0, 0)$};
    \node [below left] at (0, 0) {$(L, 0)$};
    \node [above right] at (\gridsize, \gridsize) {$(0, L)$};
    \node [below right] at (\gridsize, 0) {$(L, L)$};

    % Draw the directed edges (going up)
    \foreach \y in {1,2,...,\the\numexpr\gridsize-1\relax}
    {
        \draw[->,blue] (n-\the\numexpr\gridsize-1\relax-\y) -- (n-\the\numexpr\gridsize-1\relax-\the\numexpr\y+1\relax);
    }
    \foreach \y in {1,2,...,\the\numexpr\gridsize-1\relax}
    {
        \draw[->,red] (n-\gridsize-\y) -- (n-\gridsize-\the\numexpr\y+1\relax);
    }

    % Draw the directed edges (going right)
    \foreach \x in {0,1,...,\numexpr\gridsize-2\relax}
    {
        \draw[->, red] (n-\x-0) to (n-\the\numexpr\x+1\relax-0);
    }
    \foreach \x in {0,1,...,\numexpr\gridsize-2\relax}
    {
        \draw[->, blue] (n-\x-1) to (n-\the\numexpr\x+1\relax-1);
    }

    %extra edges
    \draw [->, blue] (n-0-0) -- (n-0-1);
    \draw [->, blue] (n-\the\numexpr\gridsize-1\relax-\gridsize) -- (n-\gridsize-\gridsize);
    \draw [->, red] (n-\the\numexpr\gridsize-1\relax-0) -- (n-\the\numexpr\gridsize-1\relax-1);
    \draw [->, red] (n-\the\numexpr\gridsize-1\relax-1) -- (n-\gridsize-1);
    
    % all the underbraces
    \draw[decorate,decoration={brace,amplitude=10pt,mirror,raise=2ex}, red]
    (0,0) -- node[below=8ex] {$X'$} (\the\numexpr\gridsize-1,0);

    \draw[decorate,decoration={brace,amplitude=10pt,mirror,raise=2ex}, red]
    (\the\numexpr\gridsize,1) -- node[midway, right=8ex] {$Y'$} (\the\numexpr\gridsize,\gridsize);

    \draw[decorate,decoration={brace,amplitude=10pt,raise=1ex}, blue]
    (0, 1) -- node[midway, above=4ex] {$X$} (\the\numexpr\gridsize-1\relax,1);

    \draw[decorate,decoration={brace,amplitude=10pt,raise=1ex}, blue]
    (\the\numexpr\gridsize-1\relax,1) -- node[midway, left=4ex] {$Y$} (\the\numexpr\gridsize-1\relax,\gridsize);

\end{tikzpicture}
\caption{The paths $X$, $X'$, $Y$, and $Y'$. The shortest path $P_X$ is the blue path from $(L,0)$ to $(L-1,L-1)$ (a vertical edge followed by the path $X$), while the not shortest path $P_{X'}$ is the red path with the same endpoints (the path $X'$ followed by a vertical edge). The shortest path $P_Y$ is the blue path from $(L-1,L-1)$ to $(0,L)$ (the path $Y$ followed by a horizontal edge), while the not-shortest path $P_{Y'}$ is the red path with the same endpoints (a horizontal edge followed by the path $Y'$). The remaining grid edges are omitted from the figure for simplicity.}
\label{fig:X}
\end{center}
\end{figure}

By the inductive hypothesis we know that
$w_H(X)+w_H(Y) > (\alpha_H)^{L-1}$.

Let $P_X$ be the shortest path in $H$ from $(L,0)$ to $(L-1,L-1)$ (see Figure \ref{fig:X}). By \cref{lem:hor}, $P_X$ 
takes one vertical edge followed by all of the edges in $X$. Consider an alternate path $P_{X'}$ with the same endpoints, which takes all of the edges in $X'$ followed by the vertical edge $(L,L-1)\rightarrow (L-1,L-1)$. By \cref{lem:hor}, $P_{X'}$ is not an $\alpha_G$-approximate shortest path in $G$, and thus cannot be an $\alpha_H$ approximate shortest path in $H$. Thus, \begin{equation}\label{eq:hor}w_H(X')+w_H\big((L,L-1),(L-1,L-1)\big) > \alpha_H\cdot w_H(X).\end{equation}

Now, we will derive a symmetric inequality for $Y$ instead of $X$. Let $P_Y$ be the shortest path in $H$ from $(L-1,L-1)$ to $(0,L)$. By \cref{lem:vert}, $P_Y$ 
takes all of the edges in $Y$ followed by one horizontal edge. Consider an alternate path $P_{Y'}$ with the same endpoints, which takes the horizontal edge $(L-1,L-1)\rightarrow (L-1,L)$ followed by all of the edges in $Y'$. By \cref{lem:vert}, $P_{Y'}$ is not an $\alpha_G$-approximate shortest path in $G$, and thus cannot be an $\alpha_H$ approximate shortest path in $H$. Thus, \begin{equation}\label{eq:vert}w_H(Y')+w_H\big((L-1,L-1),(L-1,L)\big) > \alpha_H\cdot w_H(Y).\end{equation}

Taking the sum of \cref{eq:hor,eq:vert}, we have 
\begin{equation}\label{eq:comb}w_H(X')+w_H(Y')+w_H\big((L,L-1),(L-1,L-1)\big)+w_H\big((L-1,L-1),(L-1,L)\big) > \alpha_H\cdot (w_H(X)+ w_H(Y))\end{equation}

By \cref{lem:hor}, we know that $(L,L-1)\rightarrow(L-1,L-1)\rightarrow(L-1,L)$ is a shortest path, while $(L,L-1)\rightarrow(L,L)\rightarrow(L-1,L)$ is not. Thus, the left hand side of \cref{eq:comb} is strictly less than $w_H(X')+w_H(Y')+w_H\big((L,L-1),(L,L)\big)+w_H\big((L,L),(L-1,L)\big)$, which is precisely the sum of the last row and last column of $H$. Thus, our goal is to show that the right hand side of \cref{eq:comb} is at least $(\alpha_H)^{L}$.

By the inductive hypothesis, $w_H(X)+w_H(Y) > (\alpha_H)^{L-1}$, so we have \begin{align*}\alpha_H\cdot (w_H(X)+ w_H(Y))&>\alpha_H \cdot (\alpha_H)^{L-1}\\&=(\alpha_H)^{L}.\end{align*}

This completes the proof.
\end{proof}

We have just proven in \cref{lem:rowcol} that in $H$ the sum of edge weights in the last row and last column is at least $(\alpha_H)^{\sqrt{n}-1}$. This means that at least one of these $2\sqrt{n}$ edges has weight at least $(\alpha_H)^{\sqrt{n}-1} / (2\sqrt{n}) = (\alpha_H)^{\Omega(\sqrt{n})}$. This completes the proof of \cref{thm:lbdagap}.

%\section{Exponential Lower Bound for Weight-Decreasing DAGs}

%\lbdag*

%\input{aaron}

%%%%%%%% BIB %%%%%%%%

\bibliographystyle{alpha}
 \bibliography{references}

\newcommand{\etalchar}[1]{$^{#1}$}
\begin{thebibliography}{RHM{\etalchar{+}}23}

\bibitem[AB18]{AB18}
Amir Abboud and Greg Bodwin.
\newblock Reachability preservers: New extremal bounds and approximation
  algorithms.
\newblock In {\em Proceedings of the 29th Annual ACM-SIAM Symposium on Discrete
  Algorithms (SODA)}, pages 1865--1883. Society for Industrial and Applied
  Mathematics, 2018.

\bibitem[AW20]{AW20}
Saeed~Akhoondian Amiri and Julian Wargalla.
\newblock Disjoint shortest paths with congestion on dags.
\newblock {\em arXiv preprint arXiv:2008.08368}, 2020.

\bibitem[AW23]{AW23}
Shyan Akmal and Nicole Wein.
\newblock A local-to-global theorem for congested shortest paths.
\newblock {\em {31st Annual European Symposium on Algorithms (ESA 2023)}},
  2023.

\bibitem[Bal22]{Balzotti22}
Lorenzo Balzotti.
\newblock Non-crossing shortest paths are covered with exactly four forests.
\newblock {\em arXiv preprint arXiv:2210.13036}, 2022.

\bibitem[BGW20]{BernsteinGW20}
Aaron Bernstein, Maximilian~Probst Gutenberg, and Christian Wulff{-}Nilsen.
\newblock Near-optimal decremental {SSSP} in dense weighted digraphs.
\newblock In {\em 61st {IEEE} Annual Symposium on Foundations of Computer
  Science, {FOCS} 2020, Durham, NC, USA, November 16-19, 2020}, pages
  1112--1122, 2020.

\bibitem[BK96]{BK96}
Andr{\'a}s~A Bencz{\'u}r and David~R Karger.
\newblock Approximating st minimum cuts in {\~o} (n 2) time.
\newblock In {\em Proceedings of the twenty-eighth annual ACM symposium on
  Theory of computing}, pages 47--55, 1996.

\bibitem[Bod19]{Bodwin19}
Greg Bodwin.
\newblock On the structure of unique shortest paths in graphs.
\newblock In {\em Proceedings of the Thirtieth Annual ACM-SIAM Symposium on
  Discrete Algorithms}, pages 2071--2089. SIAM, 2019.

\bibitem[BW23]{BW23}
Aaron Bernstein and Nicole Wein.
\newblock Closing the gap between directed hopsets and shortcut sets.
\newblock In {\em Proceedings of the 2023 Annual ACM-SIAM Symposium on Discrete
  Algorithms (SODA)}, pages 163--182. SIAM, 2023.

\bibitem[CDG20]{CDG20}
Ruoxu Cen, Ran Duan, and Yong Gu.
\newblock {Roundtrip Spanners with (2k-1) Stretch}.
\newblock In {\em 47th International Colloquium on Automata, Languages, and
  Programming (ICALP 2020)}, pages 24:1--24:11, 2020.

\bibitem[CDK{\etalchar{+}}21]{CDKLLPSV21}
Parinya Chalermsook, Syamantak Das, Yunbum Kook, Bundit Laekhanukit, Yang~P
  Liu, Richard Peng, Mark Sellke, and Daniel Vaz.
\newblock Vertex sparsification for edge connectivity.
\newblock In {\em Proceedings of the 2021 ACM-SIAM Symposium on Discrete
  Algorithms (SODA)}, pages 1206--1225. SIAM, 2021.

\bibitem[CE06]{CE06}
Don Coppersmith and Michael Elkin.
\newblock Sparse sourcewise and pairwise distance preservers.
\newblock {\em SIAM Journal on Discrete Mathematics}, 20(2):463--501, 2006.

\bibitem[CFR21]{CaoFR21}
Nairen Cao, Jeremy~T. Fineman, and Katina Russell.
\newblock Brief announcement: An improved distributed approximate single source
  shortest paths algorithm.
\newblock In {\em {PODC} '21: {ACM} Symposium on Principles of Distributed
  Computing, Virtual Event, Italy, July 26-30, 2021}, pages 493--496, 2021.

\bibitem[CL22]{CL22}
Daniel Cizma and Nati Linial.
\newblock Geodesic geometry on graphs.
\newblock {\em Discrete \& Computational Geometry}, 68(1):298--347, 2022.

\bibitem[CL23]{CL23}
Daniel Cizma and Nati Linial.
\newblock Irreducible nonmetrizable path systems in graphs.
\newblock {\em Journal of Graph Theory}, 102(1):5--14, 2023.

\bibitem[CPF{\etalchar{+}}20]{CPFGHKS20}
Sabine Cornelsen, Maximilian Pfister, Henry F{\"o}rster, Martin Gronemann,
  Michael Hoffmann, Stephen Kobourov, and Thomas Schneck.
\newblock Drawing shortest paths in geodetic graphs.
\newblock In {\em International Symposium on Graph Drawing and Network
  Visualization}, pages 333--340. Springer, 2020.

\bibitem[CZ23]{ChuzhoyZ23}
Julia Chuzhoy and Ruimin Zhang.
\newblock A new deterministic algorithm for fully dynamic all-pairs shortest
  paths.
\newblock In {\em Proceedings of the 55th Annual {ACM} Symposium on Theory of
  Computing, {STOC} 2023, Orlando, FL, USA, June 20-23, 2023}, pages
  1159--1172, 2023.

\bibitem[FN18]{ForsterN18}
Sebastian Forster and Danupon Nanongkai.
\newblock A faster distributed single-source shortest paths algorithm.
\newblock In {\em 59th {IEEE} Annual Symposium on Foundations of Computer
  Science, {FOCS} 2018, Paris, France, October 7-9, 2018}, pages 686--697,
  2018.

\bibitem[HKN14]{HenzingerKN14}
Monika Henzinger, Sebastian Krinninger, and Danupon Nanongkai.
\newblock Decremental single-source shortest paths on undirected graphs in
  near-linear total update time.
\newblock In {\em 55th {IEEE} Annual Symposium on Foundations of Computer
  Science, {FOCS} 2014, Philadelphia, PA, USA, October 18-21, 2014}, pages
  146--155, 2014.

\bibitem[HKNR98]{HKNR98}
Torben Hagerup, Jyrki Katajainen, Naomi Nishimura, and Prabhakar Ragde.
\newblock Characterizing multiterminal flow networks and computing flows in
  networks of small treewidth.
\newblock {\em Journal of Computer and System Sciences}, 57(3):366--375, 1998.

\bibitem[Joh77]{Johnson77}
Donald~B. Johnson.
\newblock Efficient algorithms for shortest paths in sparse networks.
\newblock {\em J. {ACM}}, 24(1):1--13, 1977.

\bibitem[KNZ14]{KNZ14}
Robert Krauthgamer, Huy~L Nguy{\^ e}n, and Tamar Zondiner.
\newblock Preserving terminal distances using minors.
\newblock {\em SIAM Journal on Discrete Mathematics}, 28(1):127--141, 2014.

\bibitem[KP22]{KP22a}
Shimon Kogan and Merav Parter.
\newblock New diameter-reducing shortcuts and directed hopsets: Breaking the
  barrier.
\newblock In {\em Proceedings of the 2022 Annual ACM-SIAM Symposium on Discrete
  Algorithms (SODA)}, pages 1326--1341. SIAM, 2022.

\bibitem[KVV04]{KVV04}
Ravi Kannan, Santosh Vempala, and Adrian Vetta.
\newblock On clusterings: Good, bad and spectral.
\newblock {\em Journal of the ACM (JACM)}, 51(3):497--515, 2004.

\bibitem[PU89a]{PU89sicomp}
David Peleg and Jeffrey Ullman.
\newblock An optimal synchronizer for the hypercube.
\newblock {\em SIAM Journal on Computing (SICOMP)}, 18(4):740–--747, 1989.

\bibitem[PU89b]{Pu89jacm}
David Peleg and Eli Upfal.
\newblock A trade-off between space and efficiency for routing tables.
\newblock {\em Journal of the ACM (JACM)}, 36(3):510--530, 1989.

\bibitem[RHM{\etalchar{+}}23]{RozhonHMGZ23}
V{\'{a}}clav Rozhon, Bernhard Haeupler, Anders Martinsson, Christoph Grunau,
  and Goran Zuzic.
\newblock Parallel breadth-first search and exact shortest paths and stronger
  notions for approximate distances.
\newblock In {\em Proceedings of the 55th Annual {ACM} Symposium on Theory of
  Computing, {STOC} 2023, Orlando, FL, USA, June 20-23, 2023}, pages 321--334,
  2023.

\bibitem[RTZ08]{RTZ08}
Iam Roditty, Mikkel Thorup, and Uri Zwick.
\newblock Roundtrip spanners and roundtrip routing in directed graphs.
\newblock {\em ACM Transactions on Algorithms (TALG)}, 4(3):1--17, 2008.

\bibitem[ST11]{ST11}
Daniel~A Spielman and Shang-Hua Teng.
\newblock Spectral sparsification of graphs.
\newblock {\em SIAM Journal on Computing}, 40(4):981--1025, 2011.

\bibitem[SZ99]{SZ99}
Avi Shoshan and Uri Zwick.
\newblock All pairs shortest paths in undirected graphs with integer weights.
\newblock In {\em 40th Annual Symposium on Foundations of Computer Science
  (Cat. No. 99CB37039)}, pages 605--614. IEEE, 1999.

\bibitem[Tho92]{Thorup92}
Mikkel Thorup.
\newblock On shortcutting digraphs.
\newblock In {\em International Workshop on Graph-Theoretic Concepts in
  Computer Science}, pages 205--211. Springer, 1992.

\bibitem[UY91]{UY91}
Jeffrey~D Ullman and Mihalis Yannakakis.
\newblock High-probability parallel transitive-closure algorithms.
\newblock {\em SIAM Journal on Computing}, 20(1):100--125, 1991.

\bibitem[ZL18]{ZL18}
Chun~Jiang Zhu and Kam-Yiu Lam.
\newblock Deterministic improved round-trip spanners.
\newblock {\em Information Processing Letters}, 129:57--60, 2018.

\bibitem[Zwi02]{Zwick02}
Uri Zwick.
\newblock All pairs shortest paths using bridging sets and rectangular matrix
  multiplication.
\newblock {\em Journal of the ACM (JACM)}, 49(3):289--317, 2002.

\end{thebibliography}

\clearpage

\begin{appendices}

\section{Proof of Lemma \ref{lem:dag-price}}
\label{app:dag-price}

\begin{proof}

Consider any paths $P,P'$ between the same set of endpoints. We will show that $w(P) - w(P') = w_H(P) - w_H(P')$, which clearly implies the lemma.

Let $P = v_{i_1}, v_{i_2}, ..., v_{i_k}$. We have 
$$w(P) = \sum_{j=1}^{k-1} w(v_{i_j},v_{i_{j+1}})$$ 
We also have 
$$w_H(P) = \sum_{j=1}^{k-1} [w(v_{i_j},v_{i_{j+1}}) + \wg(i_{j+1} - i_j)] = \sum_{j=1}^{k-1} w(v_{i_j},v_{i_{j+1}}) + \wg(i_k - i_1),$$
where the second inequality holds because the sum $\sum_{j=1}^{k-1} \wg(i_{j+1} - i_j)$ telescopes. 

Combining the inequalities above yields $w_H(P) = w(P) + \wg(i_k - i_1)$. By an identical argument we $w_H(P') = w(P') + \wg(i_k - i_1)$; here we use the fact that $P$ and $P'$ have the same endpoints, namely $v_{i_1}$ and $v_{i_k}$. Subtracting these two equalities, we conclude that
\begin{align*}
w(P) - w_H(P) = w(P') - w_H(P') \rightarrow w(P) - w(P') = w_H(P) - w_H(P'). \tag*{\qedhere}
\end{align*}
\end{proof}

\end{appendices}

 \end{document}